\newcommand\Tr{\text{Tr\,}}
\newcommand{\BE}{\begin{equation}}
\newcommand{\EE}{\end{equation}}
\newcommand{\skipc}[2]{}
\newcommand{\fig}[1]{Fig.~\ref{#1}}
\newcommand{\eq}[1]{Eq.~(\ref{#1})}
\newcommand{\Sec}[1]{Sec.~\ref{#1}}
\newcommand{\lem}[1]{Lemma~\ref{#1}}
\newtheorem{theorem}{Theorem}
\newtheorem{lemma}[theorem]{Lemma}
\newcommand{\qed}{\nobreak \ifvmode \relax \else
      \ifdim\lastskip<1.5em \hskip-\lastskip
      \hskip1.5em plus0em minus0.5em \fi \nobreak
     $\square$\fi}
\newenvironment{proof}[1][Proof.]{\begin{trivlist}
\item[\hskip \labelsep {\bfseries #1}]}{\end{trivlist}}
\newenvironment{definition}[1][Definition]{\begin{trivlist}
\item[\hskip \labelsep {\bfseries #1}]}{\end{trivlist}}
\newcommand{\be}{\begin{equation}}
\newcommand{\ee}{\end{equation}}
\newcommand{\eea}{\end{eqnarray}}
\newcommand{\bea}{\begin{eqnarray}}
\newcommand{\ket}[1]{\ensuremath{|#1\rangle}}
\newcommand{\bra}[1]{\ensuremath{\langle#1|}}
\newcommand{\braket}[2]{\ensuremath{\langle #1|#2\rangle}}
\begin{document}

\title{Quantum-accessible reinforcement learning beyond strictly epochal environments}

\author{A. Hamann}
 \affiliation{Institut f\"ur Theoretische Physik, Universit\"at Innsbruck, Technikerstra{\ss}e 21a, 6020 Innsbruck, Austria}

\author{V. Dunjko }
 \affiliation{LIACS,  Leiden  University,  Niels  Bohrweg  1,  2333  CA  Leiden,  The  Netherlands}

\author{S. W\"olk}
 \affiliation{Institut f\"ur Theoretische Physik, Universit\"at Innsbruck, Technikerstra{\ss}e 21a, 6020 Innsbruck, Austria}
 \affiliation{Institute of Quantum Technologies, German Aerospace Center (DLR), D-89077 Ulm, Germany}

\date{\today}

\begin{abstract}

In recent years, quantum-enhanced machine learning has emerged as a particularly fruitful application of quantum algorithms, covering aspects of supervised, unsupervised and reinforcement learning. Reinforcement learning offers numerous options of how quantum theory can be applied, and is arguably the least explored, from a quantum perspective.  
Here, an agent explores an environment and tries to find a behavior optimizing some figure of merit.  Some of the first approaches investigated settings where this exploration can be sped-up, by considering quantum analogs of classical environments, which can then be queried in superposition. If the environments have a strict periodic structure in time (i.e. are strictly episodic), such environments can be effectively converted to conventional oracles encountered in quantum information. 
However, in general environments, we obtain scenarios that generalize standard oracle tasks. In this work we consider one such generalization, where the environment is not strictly episodic, which is  mapped to an oracle identification setting with a changing oracle. We analyze this case and show that standard amplitude-amplification techniques can, with minor modifications, still be applied to achieve quadratic speed-ups, and that this approach is optimal for certain settings. This results constitutes one of the first generalizations of quantum-accessible reinforcement learning.

\end{abstract}


\maketitle


\section{Introduction}

In the last few years, there has been much interest in combining quantum computing and machine learning algorithms. In the domain of quantum-enhanced machine learning, the objective is to utilize quantum effects to speed-up, or otherwise enhance the learning performance. The possibilities for this are numerous \cite{Dunjko2018}. E.g. variational circuits can be used as a type of ``quantum neural network''  (more precisely, using them as function approximators  which cannot be evaluated efficiently on a conventional computer), which can be trained as a supervised learning (classification) \cite{Havlicek2019, farhi2018classification} or unsupervised learning model (generative models) \cite{Aimeur2013}. There also exist various approaches where algorithmic bottlenecks of classical algorithms are sped-up, via annealing methods \cite{neven2018}, quantum linear-algebraic methods \cite{PhysRevLett.103.150502}, or via sampling enhancements \cite{Dunjko2016}.
If the data is assumed to be accessible in a quantum form (``quantum database'') then anything from polynomial, to exponential speed-ups of classical algorithms may be possible \cite{Biamonte2017,Dunjko2018,Chia2019,Gyurik2020}\footnote{In recent times, due to progress in quantum-inspired algorithms,  the domain of algorithms where exponential speed-ups are to be expected has reduced, but many possibilities for classically intractable computations still exist.}.

Modern reinforcement learning (RL), an interactive mode of learning, combines aspects of supervised and unsupervised learning, and consequently allows a broad spectrum of possibilities how quantum effects could help.

In RL \cite{Sutton1998,Russell2003,Briegel2012}, we talk about a learning agent which interacts with an environment, by performing actions, and perceiving the environmental states, and has to learn a ``correct behavior'' -- the optimal policy -- by means of a feedback rewarding signal. Unlike a stationary database, the environment has its own internal memory (a state), which the agent alters with its actions.

In quantum-enhanced RL, we can identify two basic scenarios: i) where quantum effects can be used to speed up the internal processing \cite{PhysRevX.4.031002,Jerbi2019}, and the interaction with the environment is classical, and ii) where the interaction with the environment (and the environment itself) is quantum.  
The first framework for such ``quantum-accessible'' reinforcement learning modeled the environment as a sequence of quantum channels, acting on a communication register, and the internal environmental memory -- this constitutes a direct generalization of an unknown environment as a map-with-memory (other options are discussed shortly). In this case, the action of the environment cannot be described as unitary mapping, without considering the entire memory of the environment. In general, this memory is inaccesible to the agent. However, as discussed in \cite{Dunjko2016}, under the assumptions that the environmental memory can be purged or uncomputed in pre-defined periods, such blocks of interaction do become a (time-independent) unitary, and amenable to oracle computation techniques. 
For instance, in \cite{Dunjko2016} it was shown that  the task of identifying a sequence of actions which leads to a first reward (a necessary step before any true learning can commence) can be sped up using quantum search techniques, and in \cite{Dunjko2017} it was shown how certain environments encode more complex oracles -- e.g. Simon's oracle and Recursive Fourier Sampling oracles, leading to exponential speed-ups over classical methods.

For the above techniques to work, however, the purging of all of environmental memory is necessary to achieve time-independent unitary mappings. However, real task environments are typically not (strictly) episodic, motivating the question of what can be achieved in these more general cases.  Here, we perform a first step towards generalization by considering environments where the length of the episode can change, but this is signaled and the estimate of the episode lengths are known.
This RL scenario is well-motivated, and, fortunately maps to an oracle identification problem where the oracles change. While this generalizes standard oracular settings, it is still sufficiently simple such that we can employ standard techniques (essentially amplitude amplification) and prove the optimality of our strategies  in certain settings.

The paper is organized as follows. We will first summarize the basics scenario of quantum-accessible reinforcement learning in \Sec{sec:QRL} and discuss 
the mappings from constrained (episodic) RL scenarios to oracle identification. We show how this must be generalized for more involved environments, prompting our definition of the ``changing oracle'' problem stemming from certain classes of RL environments.
   In \Sec{sec:changing_oracle}, we focus on the changing oracle problem, analyze the main regimes, and provide an upper bound for the average success probability for the case of monotonically increasing winning space in \Sec{sec:upper_bound}. We proof in \Sec{sec:Grover} that performing consecutive Grover iterations saturates this bound. We then discuss the more general case of only overlapping winning spaces in \Sec{sec:Grover2}. We conclude by summarizing our results, by discussing possible extensions and by noting on the implications of our results of the changing oracle problem for QRL  in \Sec{sec:Conclusion}.


\section{Quantum-accessible Reinforcement Learning\label{sec:QRL}}

RL can be described as an interaction  of a learning agent $A$ with a task environment $E$ via the exchange of messages out of a discrete set which we call actions $\mathcal{A}=\lbrace a_j\rbrace$ (performed by the agent) and percepts $\mathcal{S}=\lbrace s_j\rbrace$ (issued by the environment). In addition, the environment also issues a scalar reward $\mathcal{R}=\lbrace r_j\rbrace$, which informs the agent about the quality of the previous actions  and can be defined as being a part of the percepts. The goal of the agent is to receive as much reward as possible in the long term. 

In theory of RL, the most studied environments are exactly describable by a Markov Decision process (MDP). An MDP is specified by a transition mapping $T(s' | a, s)~\in~\mathbbmss{R}^{\geq 0}$, and a reward function $R(s,a)~\in~\mathbbmss{R}$. The transition mapping $T$  specifies the probability of the environment transiting from state $s$ to $s'$, provided the agent performed the action $a$, whereas the reward function assigns a reward value to a given action of an agent in a given environmental state.

Note that in standard RL, the agent does not have a direct access to the mapping $T$, but rather to learn it, it must explore, i.e. to act in the environment which is governed by $T$. On the other hand, in dynamical programming problems (intimately related to RL), one often assumes access to the functions $T$ and $R$ directly. This distinction leads to two different takes on how agent-environment interaction can be quantized. 

In recent works  \cite{Cornelissen2018,Neukart2018,Ronagh2017} coherent access to the transition mapping $T$ is assumed, in this case, lower quantum bounds for finding the optimal policy have been found~\cite{Ronagh2019}.

In this paper, we consider the other class of generalization, proposed first in \cite{Dunjko2016}.
Here, the agent-environment interaction is modeled as a communication between an agent (A) and the environment (E) over a joint communication channel (C), thus in a three-partite Hilbert space $\mathcal{H}_E\otimes \mathcal{H}_C \otimes \mathcal{H}_A $, denoting the memory of the environment, the communication channel, and the memory of the agent. The two parties A and E interact with each other by  performing alternately completely positive trace preserving (CPTP) maps on their own memory and the communication channel. Different AE combinations are defined as  equivalent in the classical sense, if their interactions are equivalent under constant measurements of C in the computational basis. For classical luck favoring AE settings with a deterministic  strictly epochal environment E it is possible to create  a classical equivalent quantum version A$^q$E$^q$ which outperforms AE in terms of a given figure of merit as shown in \cite{Dunjko2016}.

\subsection{Strictly epochal environments}

This can be achieved by slightly modifying the maps as to purge the environmental memory which couples to the overall interaction preventing a unitary time evolution of the agents memory.  A detailed discussion of this procedure and necessary condition on the setting are outlined in \cite{Dunjko2016}. However, for our setting it is sufficient that the interaction of the agent with the environment can be effectively described as  oracle queries. Specifically if environments are strictly episodic, meaning after some fixed number of steps the setting is re-set to an initial condition, then the environmental memory can be uncomputed, or released to the agent at the end of an epoch. With this modification (called memory scavenging and hijacking in earlier works), blocks of interactions effectively act as one, time-independent unitary, which can be queried using standard quantum techniques to obtain an advantage. In this scenarios, it is possible to summarize the effect of the environment on the  state $\ket{\vec{a}}=\ket{a_1,\cdots,a_M}$ describing the sequence of actions played during a complete epoch of length $M$ by an oracle
\BE
O \ket{\vec{a}} = \left\lbrace \begin{array}{cc} -\ket{\vec{a}}&\,\text{if}\,\vec{a} \in W \\
\ket{\vec{a}}&\text{else}\end{array}\right.
\EE
 with $W$ denoting the winning space containing all sequences of actions of length $M$ which obtained a reward $r(\vec{a})$ larger than a predefined limit. Then, the agent can prepare an equal superposition state of all possible action sequences 
\BE 
\ket{\psi}=\frac{1}{\sqrt{N}}\sum\limits_{i=1}^N\ket{\vec{a}_i}
\EE
with typically $N=|\mathcal{A}|^M$. Consecutively, it can perform amplitude amplification by e.g. performing consecutive Grover iterations \cite{Grover1997,Grover1998,Brassard2000} by applying the unitary
\BE
G_\psi=\left(\mathds{1}-2\ket{\psi}\bra{\psi}\right)O\label{eq:U_G}
\EE
to $\ket{\psi}$. In this way, the agent can increase the probability to find a first winning sequence which increases in luck-favoring settings also the probability to be rewarded in the future. This  approach leading to a quadratic speed-up in exploration can be applied to many settings. However, also super-polynomial or exponential improvements can be generated for special RL settings \cite{Dunjko2017}.

\subsection{Beyond strictly epochal environments}

The simplest scenario of task environments which cannot be reformulated as an oracular problem, are arguably those which involve two oracles.  We will consider this slight generalization  in this work, as it still allows for a relatively simple treatment. This setting includes environments which simply change as a function of time such as e.g reinforcement learning for managing power consumption or  channel allocation in cellular telephone systems \cite{Han2018,Teasuro2008,Silva2006,Singh1996}. If the instances of change are known, again the blocking is possible, in which case we obtain the setting where we can realize access to an oracle but which changes as a function of time.
Closely related to this is a more specific case of variable episode length. This setting, although more special, is in particular interest in RL. Episodic environments are usually constructed by taking an arbitrary environment, and establishing a cut-off after a certain number of steps. The resulting object is again an environment derived from the initial setting. This construction is special in that given any sequence of actions $\vec{a}$ which is rewarding in a derived environment with cut-off after  $m$ steps, any sequence of actions in the environment which has a larger cut  off $M>m$ which has $\vec{a}$ as a prefix is rewarded in the second. An example for such an environment is the Grid-world problem which consists in navigating a maze and the task is to find a specific location that is rewarded \cite{Russell2003, Sutton1998,Melnikov2018}.

The classical scenarios described above, under oraculization techniques map onto the changing oracle problem (described in detail in the following section) where at a given time an oracle $\tilde{O}$ is exchanged by a different oracle ${O}$.  This generalization especially captures the scenario of a single increment of an epoch length  from $m$ to $M>m$  for  search in QRL. In this special case,  the winning space $\tilde{W}$ of $\tilde{O}$ is a subspace of $W$ of $O$. We will proof that  the optimal algorithm in this case is given by a Grover search with a continuous coherent time evolution using both oracles consecutively. However, continuing the coherent time evolution of a Grover search can be suboptimal when $\tilde{W}\not\subset W$.    
The arguments following in the next section can be used iteratively to describe multiple changes/increments of the winning space.


\section{The changing oracle problem\label{sec:changing_oracle}}
 
The situation above can be abstracted as a ``changing oracle'' problem which we specify here.
As suggested, we consider an ``oracle'' to be a standard phase-flip oracle, such that $ O\ket{x} = (-1)^{f(x)}\ket{x}$, where $f: X \rightarrow \{0,1\}$ is a characteristic function  on a set of elements $X$, with $|X| = N$; in our case $X$ denotes sequences of actions of some prescribed length. The winning set is denoted  by $W = \{x \in X|  f(x) = 1 \}$, and the states $\ket{x}$ denote a (known) orthonormal basis.

In the changing oracle problem, we consider two oracles $\tilde{O}$, and  ${O}$, with respective winning sets $\tilde{W}$ and $W$. The problem specifies two time intervals, phases, in which only one of the two oracles is available: time-steps $1 \leq k \leq K$ during which only access to  $\tilde{O}$ is available, and time-steps $K+1 \leq k < K+J$ during which only access to the second oracle $O$ is available.

For simplicity, we assume that the values of $K$, $J$, $N$ as well as the sizes of the winning sets $|\tilde{W}|=\tilde{n}$ and $|W|=n$ are known in advance, and in general, the objective is to either output an $x \in \tilde{W}$ before $K$, or, to output $x \in {W}$ in the remainder of the time.  We will refer to both $x$ as the solution. However, the exact time when the oracle changes, and does $K$ and $J$, is not important and can be unknown as we show later. Unless $K$ is in $\Omega(\sqrt{N/\tilde{n}})$, in general attempts to find a solution in the first phase will have a very low success probability no matter what we do due to the optimality of Grover's search. However, even in this case, having access to $\tilde{O}$ in the first phase, may improve our chances to succeed in the second. This is the setting we consider.

The optimal strategies vitally depend on the known relationship between $W$ and $\tilde{W}$. We will first briefly discuss all possible setting before focusing on the most interesting cases. Note, in this paper we are not looking for a strategy which uses a minimal number of queries until a solution is found, but rather, a strategy which maximizes the success probability for a fixed number of queries. However, it is also known that Grover's search achieves the fastest increase of success probability \cite{Zalka1999}. Therefore, the here described algorithms can be also used to optimize the number of queries. However, the corresponding figure of merit, which needs to be optimized, has to be defined precisely for such tasks.

\paragraph{} In the worst case, there may be no known correlation between $W$ and $\tilde{W}$. In this case, we have no advantage from having access to $\tilde{O}$, and the optimal strategy is a simple Grover's search in the second phase. 

\paragraph{} Another case with limited interest is when $W$ and $\tilde{W}$ are known to be disjointed. In this case, the first oracle might be used to constrain the search space to the complement $\tilde{W}^c,$ which contains $W$. The lower bounds for this setting are easy to find: we can assume that at $K$ the set $\tilde{W}$ is made known (any state we could have generated using $\tilde{O}$ can be generated with this information). However, in this case, the optimal strategy is still to simply apply quantum search over the restricted space $\tilde{W}^c$ if it can be fully specified. But since we most often encounter cases where  $\tilde{n}=|\tilde{W}|$ is (very) small compared to $N$, the improvement that could be obtained is also minor.
\paragraph{} Similar reasoning follows also  when the sets are not disjoint, but the intersection is small compared not just to $N$, but to $|W|$ and $|\tilde{W}|$.   In this case, again we can find lower bounds by assuming that the non-overlapping complement becomes known. In addition, we assume that we can prepare any quantum state, which has an upper bound on the overlap with any state corresponding to the intersection, $x \in W \cap \tilde{W}$. Then, the optimal strategy is again governed by the optimality of Grover-based amplitude amplification \footnote{More generally, we can allow only states for which, under no quantum channel, allow us to determine such $x$ with probability better than given by Grover iterations. This setting is a bit more involved, but it should be clear that as long as this probability is very small, whatever we do in the next phase, cannot be much better than starting from scratch.}

This brings us to the situations which are more interesting, specifically, when the overlap $W_a=W \cap \tilde{W}$ is large (see Appendix \ref{app:large overlap} for exact definition).  
 
Due to our motivation stemming from aforementioned RL settings, we are particularly interested in the case when $\tilde{W} \subseteq {W},$ for which we give the optimal strategy, which turns out to be essentially Grover's amplification where we ``pretend'' that the oracles hadn't changed.

The other cases,  ${W} \subseteq \tilde{W}$, and the more generic case where the overlap is large, but no containment hold are less interesting for our purpose, so we briefly discuss the possible strategies without proofs of optimality.

\subsection{ Increasing winning spaces: upper bound on average final success probabilities\label{sec:upper_bound}}
In the following, we consider the above described changing oracle problem with monotonically increasing winning spaces $\tilde{W} \subseteq {W}$ and derive upper bounds for the maximal average success probability $p_{K+J}$ of finding an element $x\in W$ at the end of the second phase. 
The changing oracle problem is outside the standard settings for which various lower bounding techniques have been developed \cite{Arunachalam2019, Ambainis2002,Ambainis2006}, but the setting is simple enough to be treatable by modifying and extending  techniques introduced to lower bound unstructured search problems \cite{Zalka1999}.

To find lower bounds, we first prove that we can restrict our search for optimal strategies to averaged strategies as defined in Appendix \ref{app:averaged_strategy}. This induces certain symmetries which restrict the optimization to an optimization of two angles $\alpha$ and $\Delta$, one for each phase. Finally we derive bounds $\alpha(K)$ and $\Delta(J)$ for these angles depending on $K,J$ which in turn restrict the optimal success probability $p_{K+J}$.

The search for an optimal strategy can be limited to strategies based on pure states and unitary time evolutions since it is possible to  purify any search strategy by going from a Hilbert space $\mathcal{H}_A$ spanned by $\lbrace \ket{x}\rbrace$ into a larger Hilbert space $\mathcal{H}_{AB}=\mathcal{H}_A\otimes\mathcal{H}_B$. As a consequence, every search strategy $T=(\lbrace U_k\rbrace,\ket{\psi(0)})$ based on $K+J$ oracle queries can be described by a set of $K+J$ unitaries $U_k$ and initial state $\ket{\psi(0)}$. Our knowledge about possible winning items after $k$ oracles queries is then encoded in the quantum state
\BE
\ket{\psi(k)}=U_kO_k \cdots U_1O_1\ket{\psi(0)}
\EE
with $O_k=\tilde{O}$ for $1\leq k\leq K$ and $O_k={O}$ for $K+1\leq k\leq J$. The success probability at the end of the second phase is then given by
\BE
p_{K+J}=\Tr[P_\mathcal{W}\ket{\psi(K+J)}\bra{\psi(K+J)}]
\EE
with 
\BE
P_\mathcal{W}=\Big(\sum\limits_{x\in \mathcal{W}}\ket{x}_A\bra{x}\Big) \otimes \mathds{1}_B.
\EE
Our goal is to maximize the success probability $p_{K+J}$ average over all possible functions $\tilde{f}(x)$ and $f(x)$ with fixed sizes of the winning spaces $|\tilde{W}|=\tilde{n}$ and $|{W}|={n}\geq \tilde{n}$. Different realization of $\tilde{f}(x)$ and $f(x)$ can be generated by substituting all oracle queries $O_k$ by $\sigma O_k \sigma^\dagger$ and the projector $P_\mathcal{W}$ by $\sigma P_\mathcal{W}\sigma^\dagger$ where $\sigma$ denote a permutation operator acting on $\mathcal{H}_A$. As a consequence, an optimal strategy is a strategy T which maximizes 
\BE
\bar{p}_T=\frac{1}{N!}\sum\limits_{\sigma \in \Sigma_A}p_T(\sigma) 
\EE
with
\begin{eqnarray}
p_T(\sigma)&=&\Tr\Big[\sigma P_\mathcal{W}\sigma^\dagger \ket{\psi(k,\sigma)}\bra{\psi(k,\sigma)}\Big]\\
\ket{\psi(k,\sigma)}_{AB}&=&U_k\sigma O_k\sigma^\dagger\cdots U_1 \sigma O_1\sigma^\dagger \ket{\psi(0)}_{AB}\end{eqnarray}
at the end of the second phase such that $k=K+J$. Here, $\Sigma_A$ denotes the set of all possible permutations in $\mathcal{H}_A$.

We can further limit the search for optimal strategies to averaged strategies $\bar{T}$  as defined Appendix \ref{app:averaged_strategy}   because
\begin{lemma}
The success probability $p_{\bar{T}}(\sigma)$ of the averaged strategy $\bar{T}$ is equal to the average success probability $\bar{p}_T$ of the strategy $T$ for every permutation $\sigma\in\Sigma_A$. 
\label{lemma:sym_strat}
\end{lemma}
as proven in Appendix \ref{app:averaged_strategy}. In the following, we consider only average strategies such that $p=\bar{p}$ and therefore omit the "bar" denoting an average value. 

In addition, these strategies leads to symmetry properties of the unitaries $U_k$ and resulting states $\psi(k)$ under permutations $\sigma$ as outlined in detail in Appendix \ref{app:averaged_strategy}). These symmetry properties will limit the optimization overall strategies to an optimization of a few parameters or angles as we will outline below. These parameters are then again  upper bounded by the optimality of Grover search.  

Due to the above mentioned symmetry properties, we can write the state $\ket{\psi}$ at the end of the first phase via (see Appendix \ref{app:symmetries})
\BE
\ket{\psi(K)}= \cos \varepsilon \ket{\phi_s}+ \sin \varepsilon \ket{\phi_\perp}\label{eq:phi_perp}
\EE
with the symmetric component
\BE
\ket{\phi_s}=\sin \phi \ket{w_s}+\cos \phi \ket{\ell_s}\\
\EE
and a component 
\BE
\ket{\phi_\perp}=\ket{w_\perp}
\EE
orthogonal to $\ket{\phi_s}$ which contain for $\tilde{\mathcal{W}}\subseteq \mathcal{W}$ only winning items.
The normalized components $\ket{w_s}$ contains only winning items and $\ket{\ell_s}$ only losing items according to the second oracle $O$.  The angles $\varepsilon$ and $\phi$ are parameters depending on the strategy performed during the first phase. Their values are bounded by the success probability at the end of the first phase given by
\BE
p_K=\cos^2\varepsilon\sin^2\phi+\sin^2\varepsilon. \label{eq:p_K}
\EE

The time evolution during the second phase described by
$
V=U_{K+J}O\cdots U_{K+1}O
$
is also symmetric and thus transforms the symmetric component $\ket{\phi_s}$ into a symmetric component and $\ket{w_\perp}$ into a component orthogonal to $V\ket{\phi_s}$. As a consequence, the final success probability $p_{K+J}$ can be divided into 
\BE
p_{K+J}=\cos^2 (\varepsilon)\; p_s+ \sin^2(\varepsilon) \;p_\perp
\EE
with (see Appendix \ref{app:symmetries})
\begin{eqnarray}
p_s&=&\Tr\left[P_\mathcal{W}V\ket{\phi_s}\bra{\phi_s}V^\dagger \right]\\
p_\perp&=&\Tr\left[P_\mathcal{W}V\ket{w_\perp}\bra{w_\perp}V^\dagger \right].
\end{eqnarray}

The winning probability $p_\perp$ of the orthogonal part is maximal if $p_\perp=1$ which can be achieved if e.g. $V$ acts on $\ket{w_\perp}$ as identity. By writing the winning probability of the symmetric part via 
$
p_s=\sin^2(\phi+\Delta)
$
we can quantify the final success probability via

\begin{eqnarray}
p_{K+J}&\leq&\cos^2 (\varepsilon) \sin^2(\phi+\Delta)+\sin^2(\varepsilon)\\
&\leq& 1-\cos^2(\varepsilon)\cos^2\left(\phi+\Delta\right).
\end{eqnarray}
With the help of \eq{eq:p_K} we can rewrite $\cos^2 \varepsilon$ via
$
\cos^2\varepsilon=(1-p_K)/\cos^2 \phi
$
leading to
\BE
p_{K+J}\leq 1-(1-p_K)\frac{\cos^2(\phi+\Delta)}{\cos^2 \phi}\label{eq:p2}.
\EE
As a consequence, $p_{K+J}$ is monotonically increasing with $p_K,\phi,\Delta$ provided $0\leq \phi\leq \pi/2$ and $0\leq \phi+\Delta\leq \pi/2$. Thus an optimal strategy optimizes $p_K$ and $\phi$ during the first phase and $\Delta$ during the second phase.

If we denote by
\BE
\sin^2\alpha = \Tr[P_{\tilde{W}}\ket{\psi(K)}\bra{\psi(K)}]
\EE
the winning probability at the end of the first phase according to the first oracle $\tilde{O}$, then the success probability according to the second oracle $O$ at this point is given by 
\BE
p_K=\sin^2\alpha +\cos^2\alpha \frac{n_+}{n_++n_\ell}\label{eq:p1}
\EE
following \eq{eq:psi_K_1} and \eq{eq:beta} in Appendix \ref{app:symmetries}. Here $n_+=|\mathcal{W}_+|$ with $\mathcal{W}_+=\tilde{\mathcal{L}}\cap W$ denotes the number of items $x$ marked only by the second oracle $O$ as winning and $n_\ell=|\mathcal{L|}$ the number of losing items according to $O$. Thus $p_K$  increases monotonically with $\alpha$ for $0\leq \alpha\leq \pi/2$. 

The angle $\phi$ is also upper bounded by $\alpha$ via (see Appendix \ref{app:symmetries}, \eq{eq:phi})
\BE
\tan \phi \leq \tan \alpha \sqrt{\frac{\tilde{n}(n_++n_\ell)}{(\tilde{n}+n_+)n_\ell}}+\sqrt{\frac{n_+^2}{(\tilde{n}+n_+)n_\ell}}.\label{eq:phi}
\EE
This bound also increases monotonically with $\alpha$ for $0\leq \alpha\leq \pi/2$. As a result, the final success probability is upper bounded by the maximal achievable angles $\alpha$ (defined via the strategy during the first phase) and $\Delta$ (during the second phase) within the range $0\leq \alpha \leq \pi/2$ and $0\leq \phi(\alpha)+\Delta\leq \pi/2$. 

The angles $\alpha$ and $\Delta$ can be upper bound with the help of a generalization of the optimality proof of Grover's algorithm from Zalka \cite{Zalka1999} which can be stated in the following way

\begin{lemma}
Given an oracle $O$ which marks exactly $n$ out of $N$ items as winning, then performing Grover's quantum search algorithm gives the maximal possible average success probability $p_K=\sin^2[(2K+1)\nu]$ for up to $0<K<\pi/(4\nu)-1/2$ with $\sin^2 \nu=n/N$. \label{lem:Grover}
\end{lemma}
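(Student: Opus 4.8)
The plan is to adapt Zalka's optimality argument \cite{Zalka1999} to the averaged-strategy framework already set up, treating achievability and optimality separately. By Lemma~\ref{lemma:sym_strat} I may restrict to averaged strategies, for which the state $\ket{\psi(k)}$ is invariant under every permutation fixing $W$. It is therefore fully characterized by its projections onto the two symmetric directions $\ket{w}=\frac{1}{\sqrt{n}}\sum_{x\in W}\ket{x}$ and $\ket{\ell}=\frac{1}{\sqrt{N-n}}\sum_{x\notin W}\ket{x}$ (each dressed with an ancilla factor), so the whole dynamics collapses to a single winning angle $\theta_k$ defined by $\sin^2\theta_k=\Tr[P_{\mathcal W}\ketbra{\psi(k)}]=p_k$. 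In this representation the oracle $O=\mathds{1}-2P_{\mathcal W}$ is simply the reflection that flips the sign of the $\ket{w}$-component while fixing $\ket{\ell}$.

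For achievability I would exhibit Grover's iteration $G_\psi$ of \eq{eq:U_G} as one such averaged strategy. Starting from the uniform superposition $\ket{s}=\sin\nu\,\ket{w}+\cos\nu\,\ket{\ell}$, which sits at winning angle $\nu$ with $\sin^2\nu=n/N$, each iteration is a product of two reflections whose mirror directions enclose the angle $\nu$, and hence rotates the state in the $\{\ket{w},\ket{\ell}\}$ plane by exactly $2\nu$. After $K$ queries the angle is $(2K+1)\nu$, giving $p_K=\sin^2[(2K+1)\nu]$. The stated range $0<K<\pi/(4\nu)-1/2$ is precisely the condition $(2K+1)\nu<\pi/2$, which keeps us on the increasing branch of $\sin^2$, i.e. before the amplitude is overshot.

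For optimality the target is the per-step bound $\theta_k\le\theta_{k-1}+2\nu$, from which $\theta_K\le(2K+1)\nu$ follows by induction and, by monotonicity of $\sin^2$ on $[0,\pi/2)$, $p_K\le\sin^2[(2K+1)\nu]$ throughout the stated regime, matching the Grover value. To establish it I would compare $\ket{\psi(k)}$ with the reference state $\ket{\psi_k^0}=U_k\cdots U_1\ket{\psi(0)}$ obtained from the same unitaries but no oracle calls. Since the $U_k$ of an averaged strategy commute with all permutations, they cannot single out the unknown set $W$, so $\ket{\psi_k^0}$ retains winning overlap exactly $\sin\nu$ for every $k$. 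The only $W$-sensitive part of a step is the oracle, and $(O-\mathds{1})\ket{\psi(k-1)}=-2P_{\mathcal W}\ket{\psi(k-1)}$ has norm $2\sin\theta_{k-1}$; this is what limits how far $\ket{\psi(k)}$ can tilt away from the reference angle $\nu$, and Zalka's refined tracking of this deviation yields the tight constant $2\nu$.

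The hard part will be upgrading the deviation estimate into the \emph{tight} constant $2\nu$ rather than a merely qualitative $O(\nu)$ per-step bound. A naive triangle-inequality chain on the state angle only gives $\theta_k\le\nu+2\sum_{j<k}\theta_j$, which grows quadratically in $k$ and is strictly weaker than $(2K+1)\nu$; it reproduces the $\Omega(\sqrt{N/n})$ query lower bound but not the exact optimal success probability. Obtaining the sharp result requires Zalka's more careful analysis, which follows the signed angle between $\ket{\psi(k)}$ and $\ket{\psi_k^0}$ (rather than their Euclidean distance) and accounts for any amplitude $\ket{w_\perp}$ lying outside the $\{\ket{w},\ket{\ell}\}$ plane, showing it cannot contribute more than the $2\nu$ rotation. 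Once this geometric bookkeeping is in place the induction closes; the generalization from Zalka's single-target case $n=1$ to arbitrary $n$ is immediate, since $n$ enters only through $\sin^2\nu=n/N$.
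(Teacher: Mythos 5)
Your achievability half is fine and matches the paper: the Grover iteration rotates by $2\nu$ per query in the $\{\ket{w},\ket{\ell}\}$ plane, and the stated range is exactly $(2K+1)\nu<\pi/2$. The optimality half, however, has a genuine gap in two places. First, the reduction you lean on does not hold: invariance of an averaged strategy under the joint permutations $\sigma\otimes\pi(\sigma)$ does not collapse the purified dynamics onto a fixed two-dimensional plane with a single angle $\theta_k$. The symmetric sector of $\mathcal{H}_{ABC}$ is high-dimensional, and winning amplitude can sit in components orthogonal to $\ket{w_s}$ --- e.g. $\frac{1}{\sqrt{n}}\sum_{x\in W}\ket{x}_A\ket{\gamma_x}_{BC}$ with mutually orthogonal $\ket{\gamma_x}$ is permutation invariant but carries exactly the kind of $\ket{w_\perp}$ weight the paper has to track separately in \App{app:symmetries}; your closing remark that one must "account for" such amplitude concedes the point and contradicts the single-angle framing. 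Second, and decisively, the entire content of the lemma is the tight per-step constant, and your proposal defers it: your only worked estimate (the triangle-inequality chain) gives a quadratic, non-tight bound, and your description of how the gap is closed --- stepwise tracking of a "signed angle" between $\ket{\psi(k)}$ and the no-oracle reference $\ket{\psi_k^0}$ --- is not how Zalka's argument, or the paper's generalization in \App{sec:max_diff}, actually works. There is no per-step angle recursion there; the proof is a global two-sided inequality over the whole ensemble of $D=\binom{N}{n}$ oracles,
\BE
2D-2D\sqrt{p\tfrac{n}{N}}-2D\sqrt{(1-p)\left(1-\tfrac{n}{N}\right)}\;\leq\;\sum\limits_{y=1}^{D}\bigl\|\ket{\phi_J}-\ket{\phi_J^y}\bigr\|^2\;\leq\;4D\sin^2(J\nu),
\EE
whose right side is a concavity estimate over all $J$ steps at once (\lem{lemma:distance}) and whose left side is a distinguishability bound on final states (\lem{lem:zalka2}), with Grover saturating both.

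Relatedly, your claim that the extension from $n=1$ to general $n$ is "immediate, since $n$ enters only through $\sin^2\nu=n/N$" skips precisely the new content of the paper's proof: one must sum over all $\binom{N}{n}$ winning sets, use that each item lies in $d=\binom{N-1}{n-1}$ of them so that $d/D=n/N$ turns the argument of Zalka's function $f$ into $4J^2\sin^2\nu$ in the upper bound, and use the normalization $\sum_y a_y=d$ to pin the extremal $a_y=n/N$ in the lower bound. (A smaller structural point: the paper does not route this lemma through \lem{lemma:sym_strat} at all --- the average over winning sets is built into the statement, and the hybrid comparison with the empty-oracle run is made directly; your observation that a permutation-invariant no-oracle evolution keeps overlap $n/N$ appears there only as the extremal configuration $a_y=d/D$.) To repair your argument you would either reproduce the two-sided inequality above, or note that its generalization at the end of \App{sec:max_diff} --- $p\le\sin^2(\phi+2J\nu)$ from any starting success probability $\sin^2\phi$ common to all oracle instances --- applied with $J=1$ is exactly the per-step bound $\theta_k\le\theta_{k-1}+2\nu$ you postulated; but as written, that bound is asserted, not proven, and it is the whole theorem.
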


The proof of this lemma follows the optimality proof from Zalka for $n=1$ given in \cite{Zalka1999}. We outline the difference in the proof for $n>1$ in Appendix \ref{sec:max_diff}. In general, the angle $2K\nu$ does not only limit the maximal success probability via $p\leq \sin^2[(2k+1)\nu]$ when starting from a random guess, equal to  $p_0=\sin^2 \nu=n/N$, but to $p\leq \sin^2[2k\nu+\phi]$ when starting from any fixed initial success probability $p_0=\sin^2\phi$, as we also outline in Appendix \ref{sec:max_diff}.

 As a consequence, the maximal angle $\alpha$ is bounded by $\alpha\leq (2K+1)\tilde{\nu}$ with $\sin^2\tilde{\nu}=\tilde{n}/N$ which follows directly from \lem{lem:Grover} provided $(2k+1)\tilde{\nu}\leq\pi/2$.
And the winning probability of $p_s$ is limited by $\sin^2(\phi+\Delta)$ with $\Delta<2K\nu$ provided $2k\nu+\phi\leq \pi/2$. 


\subsection{Grover search is optimal for monotonically increasing winning spaces\label{sec:Grover}}

\begin{figure}
\begin{center}
\includegraphics[width=0.4\textwidth]{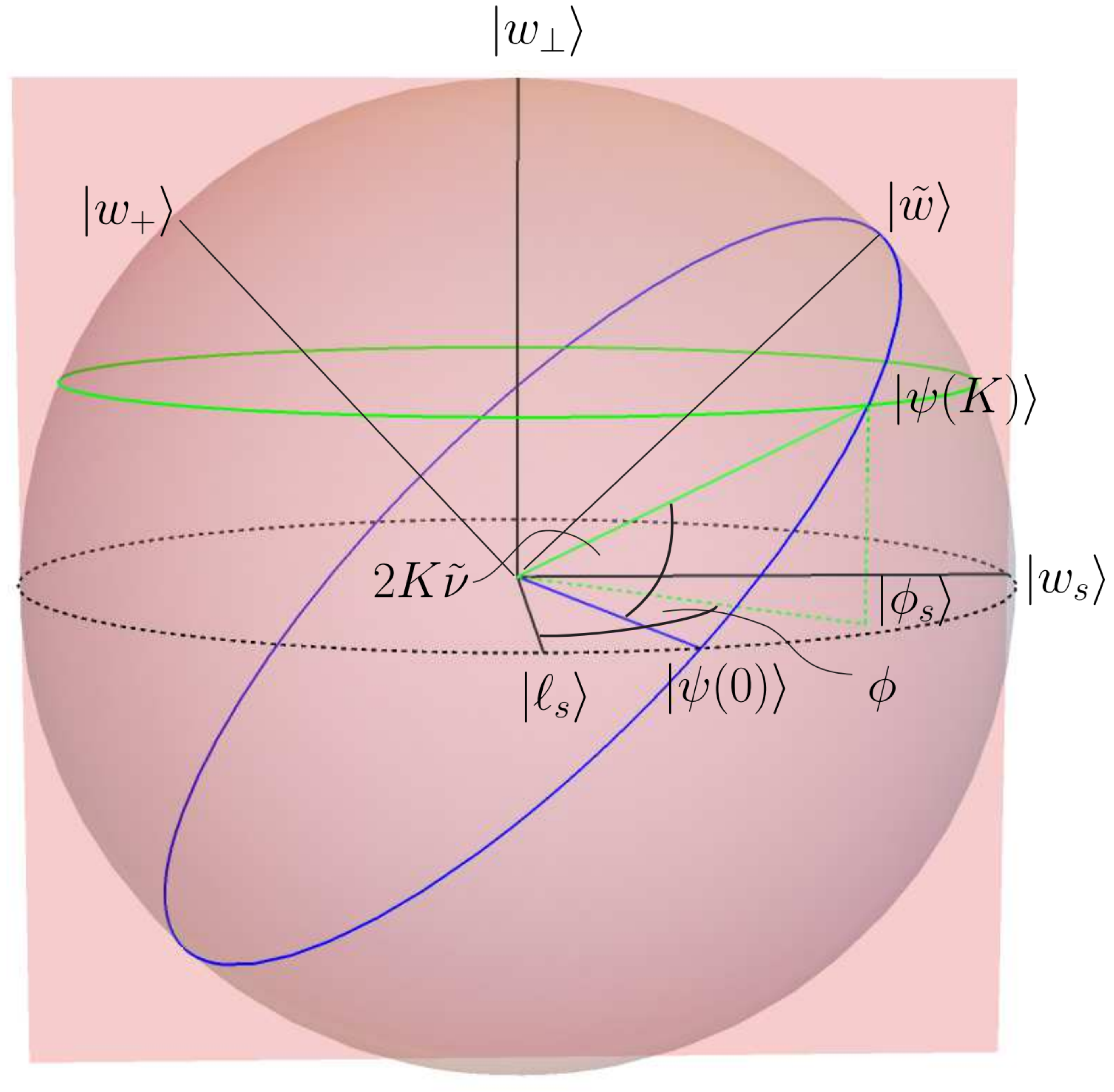}
\caption{Visualization of the time evolution of $\ket{\psi(0)}$ under Grover iterations with changing oracles $O_k=\tilde{O}$ for $1\leq k\leq K$ and $O_k=O$ for $K+1\leq k\leq J$ and $\tilde{\mathcal{W}}\subseteq \mathcal{W}$. The winning space (red plane) of $O$ is spanned by $\lbrace\ket{w_s},\ket{w_\perp}\rbrace$. 
The equal superposition state $\ket{\psi(0)}$ is rotated along the blue circle by an angle $2K\tilde{\nu}$ during the first phase leading to the state $\ket{\psi(K)}$. Consecutively, this state is rotated along the green circle changing only its component $\ket{\phi_s}$ but not $\ket{w_\perp}$.}\label{fig:grover}
\end{center}
\end{figure}

In this section we determine the (average) success probability $p_{K+J}$ for the here defined changing oracle problem obtained via a generalized Grover algorithm and show that it saturates the in \Sec{sec:upper_bound} derived bound. Grover's algorithm starts in a equal superposition state given by
\begin{eqnarray}
\ket{\psi(0)}&=&\frac{1}{\sqrt{N}}\sum\limits_{x=1}^N \ket{x}\\
&=&\sin\tilde{v}\;\ket{\tilde{w}}+\cos\tilde{v}\;\ket{\tilde{\ell}}
\end{eqnarray}
with 
\begin{eqnarray}
\ket{\tilde{w}}&=&\frac{1}{\sqrt{\tilde{n}}}\sum\limits_{x\in \tilde{\mathcal{W}}} \ket{x}\\
\ket{\tilde{\ell}}&=&\frac{1}{\sqrt{N-\tilde{n}}}\sum\limits_{x\in \tilde{\mathcal{L}}} \ket{x}.
\end{eqnarray}
All unitaries $U_k$ for $1\leq k\leq K+J$ are given by
\BE
U_k=\mathds{1}-\ket{\psi(0)}\bra{\psi(0)}.\label{eq:U_K}
\EE
The time evolution during the first phase with oracle $\tilde{O}$ leads to a rotation of $\ket{\psi(0)}$ by an angle $2K\tilde{\nu}$ in the plane spanned by $\ket{\tilde{w}}$ and $\ket{\psi(0)}$ as depicted in \fig{fig:grover}.
The state at the end of the first phase is given by
\BE
\ket{\psi(K)}=\sin[(2K+1)\tilde{\nu}]\ket{\tilde{w}}+\cos[(2K+1)\tilde{\nu}]\ket{\tilde{\ell}}
\EE
and thus saturates the upper limit $\alpha=(2K+1)\tilde{\nu}$ leading to a maximal $p_K$ and $\alpha$. To describe the time evolution during the second phase, we perform a basis transformation into the new basis \begin{eqnarray}
\ket{\ell_s}&=&\frac{1}{\sqrt{n_\ell}}\sum\limits_{x\in \mathcal{L}}\ket{x}\\
\ket{w_s}&=&\frac{1}{\sqrt{n}}\sum\limits_{x\in \mathcal{W}}\ket{x}\\
\ket{w_\perp}&=&\sqrt{\frac{n_+}{\tilde{n}+n_+}}\sum\limits_{x\in \tilde{\mathcal{W}}} \ket{x}-\sqrt{\frac{n}{\tilde{n}+n_+}}\sum\limits_{x\in  \mathcal{W}_+} \ket{x}
\end{eqnarray}
with $\mathcal{W}_+=\tilde{\mathcal{L}}\cap \mathcal{W}$ and $n_+=|W_+|$.
The states $\ket{w_s}$ and $\ket{\ell_s}$ are symmetric under permutations permuting only winning states with winning states and losing states with losing states similar to the symmetry properties of averaged strategies discussed in Appendix \ref{app:symmetries}. The state $\ket{\psi(K)}$ is given in this new basis by
\BE
\ket{\psi(k)}=\cos \varepsilon \Big(\sin \phi \ket{w_s}+\cos \ket{\phi} \ket{\ell_s}\Big)+\sin \phi \ket{w_\perp}
\EE
with the angle $\phi$ defined via
\begin{eqnarray}
\tan\phi &=& \frac{\braket{w_s}{\psi(K)}}{\braket{\ell_s}{\psi(k)}} \\
&=&\tan \alpha \sqrt{\frac{\tilde{n}(n_++n_\ell)}{(\tilde{n}+n_+)n_\ell}}+\sqrt{\frac{n_+^2}{(\tilde{n}+n_+)n_\ell}}
\end{eqnarray}
saturating \eq{eq:phi}. The angle $\varepsilon$ is given by
\BE
\sin \varepsilon =\sqrt{\frac{n_+}{n_++\tilde{n}}}\left[\sin(\alpha)-\sqrt{\frac{\tilde{n}}{n_++n_\ell}}\cos(\alpha)\right].
\EE

The time evolution during the second phase, given by oracle $O$ and $U_k$ as given in \eq{eq:U_K}, leads to a rotation of $\ket{\psi(K)}$ by an angle $2J\nu$ in a plane parallel to the one spanned by $\ket{\psi(0)}$ and $\ket{w}$ as depicted in \fig{fig:grover}. As a consequence, the final state is given by
\BE
\begin{split}
\ket{\psi(K+J)}=&\cos \varepsilon\Big[\sin(\phi+2J\nu)\ket{w_s}+\cos(\phi+2J\nu)\ket{\ell_s}\Big]\\
&+(-1)^J\sin \varepsilon \ket{w_\perp}
\end{split}
\EE
leading to the maximal possible angle $\Delta=2J\nu$ and maximal $p_\perp=\sin^2\varepsilon$ and thus to the maximal possible (average) success probability $p_{K+J}$.

As a result, performing consecutive Grover iterations in the first and second phase with in total $K+J$ oracle queries leads to the maximal possible average success probability $p_{K+J}$ provided $\alpha=(2K+1)\tilde{\nu}\leq \pi/2$ and $\phi(\alpha)+2J\nu\leq \pi/2$.

If more queries are available such that $(2K+1)\tilde{\nu}> \pi/2$ or $\phi+2J\nu> \pi/2$, then it is possible to over-rotate the state $\ket{\psi}$ such that applying $\tilde{O}$ or $O$ less often or performing another algorithm like e.g. fixed-point search \cite{Yoder2014} leads to a higher success probability.  

In general, the change of $\ket{\psi(k)}$ which can be created with a single oracle  query $O$ ($\tilde{O}$) is limited by $|\braket{\psi(k+1)}{\psi(k)}|\geq \cos2\nu$ ($\cos 2\tilde{\nu}$). The maximal possible difference between $\ket{\psi(0)}$ and $\ket{\psi(K+J)}$ achievable under this constrains would require that all states $\ket{\psi(k)}$ ly within a single plane (see discussion in \cite{Zalka1999}). However, changing the oracle in Grover's algorithm leads to a change or tilt of the rotation plane/ axis as visualized in \fig{fig:grover}. Nevertheless, performing Grover iterations is the optimal strategy as we have proven.
 In addition, changing the oracle creates a component $\ket{\phi_\perp}$ which stays invariant under consecutive Grover iterations with the new oracle. Luckily, this component contains only winning items such that it does not prevent us from further increasing the success probability with Grover iterations if $\mathcal{\tilde{W}}\subseteq\mathcal{W}$. As a consequence, the optimality of Grover's algorithm in the case of a changing oracle might be not surprising but is also not obvious. Especially because performing Grover's algorithm with the maximal number of available  oracle queries is not necessarily optimal if $\tilde{\mathcal{W}}$ and $\mathcal{W}$ only share a large overlap but $\tilde{\mathcal{W}}\not\subseteq\mathcal{W}$.

%

\subsection{Grover iterations for $\tilde{\mathcal{W}}\not\subseteq\mathcal{W}$\label{sec:Grover2}}

In the following, we investigate the performance of Grover's algorithm  if $\tilde{\mathcal{W}}$ and $\mathcal{W} $ share a large overlap (see Appendix \ref{app:large overlap}) but $\tilde{\mathcal{W}}\not\subseteq\mathcal{W}$. We will show that performing the maximal number $K$ of oracle queries during the first phase is not always optimal depending on the number of available queries $J$ in the second phase.  

If $\tilde{\mathcal{W}}\not\subset\mathcal{W}$ then the perpendicular component $\ket{\phi_\perp}$, \eq{eq:phi_perp} also includes a losing component $\ket{\ell_\perp}$ such that the state $\ket{\psi(K)}$ can be written via
\begin{eqnarray}
\ket{\psi(K)}&=& \cos \varepsilon \ket{\phi_s}+ \sin \varepsilon \ket{\phi_\perp}\\
\ket{\phi_s}&=&\sin \phi \ket{w_s}+\cos \phi \ket{\ell_s}\\
\ket{\phi_\perp}&=&\sin \chi \ket{w_\perp}+\cos \chi \ket{\ell_\perp}.
\end{eqnarray}
Applying Grover iterations with unitaries $U_k$ as defined in \eq{eq:U_K} does not change the success probability of the component $\ket{\phi_\perp}$. It only changed the success probability of the component $\ket{\phi_s}$ leading to
\begin{eqnarray}
p_{K+J}&=&\cos^2\varepsilon \sin^2(\phi+\Delta)+\sin^2\varepsilon \sin^2\chi\\
&\leq&1-\sin^2\varepsilon\cos^2\chi
\end{eqnarray} 
with $\Delta=2J\nu$. As a consequence, the success probability at the end of second phase is limited by $1-|\braket{\ell_\perp}{\psi(K)}|^2$ and thus by the weight of the orthogonal losing component created during the first phase.

In this case, the success probability $p_{K+J}$
is still monotonically increasing with $\Delta$. Therefore, performing the maximal possible number ($J$) of Grover iterations during the second phase is still a good idea provided $\phi+2J\nu\leq \pi/2$. However, performing the maximal number ($K$) of Grover iterations during the first phase  is not optimal if it leads to phases $\phi=\phi(K)$ and $\chi=\chi(K)$ such that
\BE
\sin^2 \chi(K)< \sin^2[2J\nu+\phi(K)].
\EE 
In this situations, performing less Grover iterations $K'<K$ during the first phase can lead to a higher final success probability $p_{K'+J}>p_{K+J}$.
In general, it is optimal to perform the maximal number $K$ of Grover iterations during the first phase if $J=0$ (provided $(2K+1)\tilde{\nu}<\pi/2$). However, less and less effective queries to the first oracle $\tilde{O}$ should be used the more queries to the second oracle are available as demonstrated in \fig{fig:bad_grover}. 

\begin{figure}
\begin{center}
\includegraphics[width=0.4\textwidth]{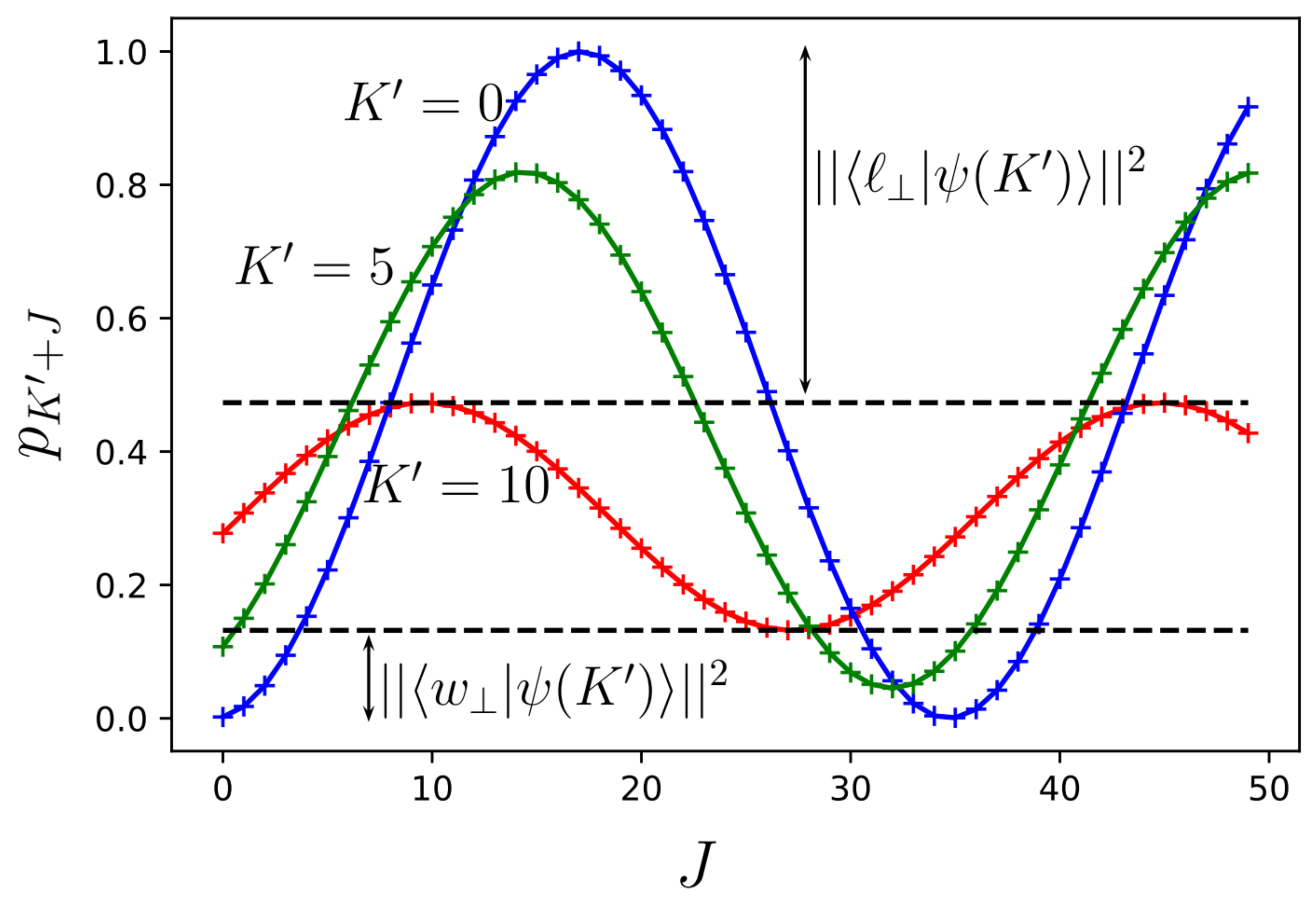}
\caption{Comparison of the success probabilities $p_{K'+J}$ for different numbers $K',J$ of Grover iterations during the first and second phase with $K'=0$ (blue), $K'=5$ (green) and $K'=10$ (red)  for $n_\ell=5000, \tilde{n}=15, n=10,n_+=5$ and thus $n_-=10=|\mathcal{\tilde{W}}\cap \mathcal{L}|$. The success probability for $J=0$ increases with $K'$. However, the maximal possible probability max $ (p_{K'+J})$ maximized overall $J$ decreases with $K'$ such that different $K'(J)$ are optimal for different $J$. }\label{fig:bad_grover}
\end{center}
\end{figure}


\section{Conclusion\label{sec:Conclusion}}

Research in quantum enhanced reinforcement learning has motivated quantum computation scenarios involving two systems, the agent and its environment, with restricted access to each other. In special cases, the interaction of the agent with its environment can be reduced to unitary oracle queries. However, general settings do not allow such a treatment due to memory effects induced by the environment.

In this paper, we generalized  the basic case, where the environment acts effectively as a single fixed oracle, to settings where the oracle changes in time. 
This was  motivated by standard grid-world type problems,  where the number of consecutive actions within a single epoch can grow or shrink. 
We have demonstrated that the search for a winning action sequence of increasing  length can be described as a search in a data based with fixed sequence length (equal to the maximal sequence length) but changing oracle leading to an increase of the winning space. 
We analyzed this setting and identified Grover-type amplitude amplification as optimal strategy for monotonically increasing winning spaces. 

However, continuing coherent Grover iterations when the target space decreases will partially trap the resulting state within the losing subspace. As a consequence, the winning probability will be limited, with a limit clearly below unity,  if we continue with Grover iterations after the oracle has changed.

It is easy to conceive a cascade of ever more general problems. For example, in slightly more general settings the agent might be allowed to chose if and when to change the effective oracle. In this way, the agent might combine breadth-first and depth-first search in a single coherent search for RL. Often, shorter winning action sequences are preferred but longer winning action sequences are more likely. Increasing the sequence length during a coherent quantum search will amplify the probability for shorter winning sequences more than for longer sequences. Combing different oracles, corresponding to different sequence lenght, within a single Grover search might therefore  help to balanced the tradeoff between the desire for short winning sequences on the one side and high winning probabilities on the other.

The goal in RL is in general to minimize a given cost function instead of maximizes solely the success probability. In general, performing consecutive Grover iterations  can be also used to minimize the average number of oracle queries necessary until a winning item is found.  An optimal algorithm will depend on the exact cost function we want to minimize. For example, the search algorithm described in \cite{Boyer1998} is only optimal in terms of oracle queries. However, the number of elementary qubit gates necessary to perform a Grover search can be reduced by using a recursive Grover search \cite{Wolf2015} which separates the database into several subgroups. In RL, queries to different oracles  might be connected to different cost. In such setting, an optimal algorithm might  use different oracles  in a recursive way for a quantum search.

Finally, possibly the most interesting extensions  would avoid reductions of environments to unitary oracles, and identify new schemes to obtain improvements in settings which may be more applicable in real-world RL settings. We leave these more general considerartionions for follow-up investigations.


\section*{Acknowledgment}

The authors thanks H.J. Briegel, F. Flamini, S. Jerbi, D. Orsucci and L. Trenkwalder for fruitful discussions. SW acknowledges support from the Austrian Science Fund (FWF) through the SFB BeyondC. AH  acknowledges support from the Austrian Science Fund (FWF) through the project P 30937-N27. This work was also supported by the Dutch Research Council (NWO/OCW), as part of the Quantum Software Consortium programme (project number 024.003.037).

\pagebreak
\begin{appendix}
\begin{widetext}
\section{Large overlap of $\tilde{\mathcal{W}}$ and $\mathcal{W}$\label{app:large overlap}}
We say that the winning spaces $\tilde{\mathcal{W}}$ and $\mathcal{W}$ have a large overlap if increasing the probability $\tilde{p}$ for $x\in \tilde{\mathcal{W}}$ uniformly also increases the probability $p$ to find $x\in {\mathcal{W}}$.

In general, optimal search strategies can be always constructed in such a way that the probability for all winning states $p(x|f(x)=1)$ are equal as outlined in Appendix \ref{app:averaged_strategy}. The same holds for losing states $\ket{x}$ with $f(x)=0$. Let $n_a=|\tilde{\mathcal{W}}\cap \mathcal{W}|$ ($n_\ell$) be the number of states which are marked as winning (losing) by both oracles and $n_-=|\tilde{\mathcal{W}}\cap \mathcal{L}|$ ($n_+$) the number of states which win only according to the first (second) oracle. Thus, the total number of items is given by $N=n_a+n_\ell+n_-+n_+$. We denote the probabilities to find any state which always wins, always loses, wins only in the first phase and wins only during the second phase by $p_a,p_{\ell},p_-,p_+$. Increasing the initial probability $\tilde{p}=p_a+p_-=(n_a+n_-)/N$ during the first phase in a symmetric way as outlined in Appendix \ref{app:averaged_strategy} by a factor $\alpha$ leads to
\BE
p_a\rightarrow \alpha p_a,\quad p_-\rightarrow \alpha p_-,\quad p_{\ell}\rightarrow \beta p_{\ell},\quad p_+\rightarrow \beta p_+
\EE
with
\BE
\beta= \frac{N-\alpha (n_a+n_-)}{n_\ell+n_+}
\EE
due to normalization.
This leads to a change of $p$ given by
\BE
p=p_a+p_+\rightarrow \frac{n_+}{n_{\ell}+n_+}+\alpha \frac{n_an_{\ell}-n_+n_-}{n_{\ell}+n_+}.
\EE
As a result, we can increase  $p$ by increasing $\tilde{p}$ in a symmetric way whenever
\BE
n_an_{\ell}>n_+n_-\label{eq:large_overlap1}.
\EE
As a result, we say $\tilde{W}$ and $W$ share a large overlap if they fulfill \eq{eq:large_overlap1}.

\section{Averaged search strategies\label{app:averaged_strategy}}
In the following we consider search problems defined via some set of $N$ orthonormal  states $\lbrace\ket{n}_A\rbrace$ forming the basis of the Hilbert space $\mathcal{H}_A$ which can be separated into two subsets $\mathcal{H}_A=\mathcal{W}\cup\mathcal{L}$,  the set of winning states $\mathcal{W}$ and the set of losing states $\mathcal{L}$ with $\mathcal{W}\cap \mathcal{L}=\emptyset$. Information about winning states can be obtained by querying phase-flip oracles
\BE
O_k=P_{\mathcal{W}_k}-P_{\mathcal{L}_k}.
\EE
where $P_{\mathcal{W}_k}$ and $P_{\mathcal{L}_k}$  denote projectors on some subspaces $\mathcal{W}_k$ and $\mathcal{L}_k$ forming together again the complete Hilbert space $\mathcal{H}_A=\mathcal{W}_k\cup\mathcal{L}_k$ with $\mathcal{W}_k\cap \mathcal{L}_k=\emptyset$. For standard search problems we have $\mathcal{W}_k=\mathcal{W}\;\forall k$ and $\mathcal{L}_k=\mathcal{L}\;\forall k$. However, for more general search problems such as the here considered changing oracle problem, the subspaces $\mathcal{W}_k$ and $\mathcal{L}_k$ might differ from query to query.

Our goal is to find any state $\ket{n}_A\in\mathcal{W}$ with the help of maximally $K$ oracle queries. All possible search strategies can be represented via unitary operations and pure initial states since it is possible to purify any search strategy by going to a larger Hilbert space  $\mathcal{H}_{AB}=\mathcal{H}_{A}\otimes\mathcal{H}_B$ and defining the generalize operators
\BE
O_{AB}=O_{A}\otimes\mathds{1}_B, \;P_{\mathcal{W},AB}=P_{\mathcal{W},A}\otimes\mathds{1_B},\; P_{\mathcal{L},AB}=P_{\mathcal{L},A}\otimes\mathds{1_B}.\label{eq:purification}
\EE
To avoid a notation with over boarding indices, we skip the labels indicating the  different subspace the operators/unitaries are working on if they are not crucial. Operators with a subspace index, such as e.g. $\sigma_A$ acting on a state from a larger Hilbert space, e.g. $\ket{\psi}_{AB}$ are meant as short forms of the generalized operators defined similar to \eq{eq:purification}.     

Any search strategy $T$ to find a state $\ket{n}\in \mathcal{W}$ can be described via $T=(\lbrace{U_k}\rbrace,\ket{\psi(0)}_{AB})$ with a pure initial state $\ket{\psi(0)}_{AB}$ and unitaries $\lbrace{U_k}\rbrace$ acting on the combined Hilbert space $\mathcal{H}_{AB}$  leading  after $K$ oracle queries to the final state
\BE
\ket{\psi(K)}_{AB}=U_KO_K\cdots U_2O_2U_1 O_1 \ket{\psi(0)}_{AB} 
\EE
and a consecutive projective measurement.
Without loss of generality, we apply first an oracle query since any unitary $U_0$ applied before can be subsumed into the initial state. The probability $p_T$ to identify a winning state correctly for a given strategy $T$ and set of oracles $\lbrace O_k\rbrace$ is then given by
\BE
p_T=\Tr\Big(P_\mathcal{W}\ket{\psi(K)}\bra{\psi(K)}\Big).
\EE

Let $\sigma$ denote a permutation operator acting on $\mathcal{H}_A$ and $\Sigma_A$ denoting the group of operators of all possible permutations.  
The average winning probability $\bar{p}_T$ of the strategy $T$ is defined via
\BE
\bar{p}_T=\frac{1}{N!}\sum\limits_{\sigma \in \Sigma_A}p_T(\sigma) =\frac{1}{N!}\sum\limits_{\sigma \in \Sigma_A}\Tr\Big[\sigma P_\mathcal{W}\sigma^\dagger \ket{\psi(K,\sigma)}\bra{\psi(K,\sigma)}\Big]
\EE
with
\BE
\ket{\psi(K,\sigma)}_{AB}=U_K\sigma O_K\sigma^\dagger\cdots U_1 \sigma O_1\sigma^\dagger \ket{\psi(0)}_{AB} 
\EE
being the resulting state if we substitute every oracle $O_k$ by $\sigma O_k\sigma^\dagger$.

For every search strategy $T_{AB}=(\lbrace{U_{k,AB}}\rbrace,\ket{\psi(0)}_{AB})$ we can  define an averaged strategy $\bar{T}_{ABC}$ via

\begin{definition}
The averaged strategy $\bar{T}_{ABC}=(\lbrace{\bar{U}_{k,ABC}}\rbrace,\ket{\bar{\psi}(0)}_{ABC})$ of  the strategy $T_{AB}=(\lbrace{U_{k,AB}}\rbrace,\ket{\psi(0)}_{AB})$
is defined via the averaged initial state
\BE
\ket{\bar{\psi}(0)}_{ABC}=\frac{1}{\sqrt{N!}}\sum\limits_{\sigma_\gamma\in \Sigma_A} \sigma_\gamma^\dagger\ket{\psi(0)}_{AB}\ket{\gamma}_C.
\EE
and average unitaries
\BE
\bar{U}_{k,ABC}=\sum\limits_{\sigma_\gamma \in \Sigma_A} \sigma_\gamma^\dagger U_{k,AB} \sigma_\gamma\otimes \ket{\gamma}_C\bra{\gamma}.
\EE
\end{definition}
Here, the states $\lbrace \ket{\gamma}_C\rbrace$ are given by an arbitrary orthonormal basis of a Hilbert space $\mathcal{H}_C$ with dimension $d_C=N!$ acting as labels for the applied permutation operator $\sigma_\gamma$ acting on $\mathcal{H}_A$. 

The averaged strategy $\bar{T}$ has the following properties:

\begin{lemma}
The success probability $p_{\bar{T}}(\sigma)$ of the averaged strategy $\bar{T}$ is equal to the average success probability $\bar{p}_T$ of the strategy $T$ for every permutation $\sigma\in\Sigma_A$. 
\label{lemma:sym_strat}
\end{lemma}

\begin{proof}
The success probability $p_{\bar{T}}(\sigma)$ is given by
\BE
p_{\bar{T}}(\sigma)=\Tr_{ABC}\left[\sigma P_{\mathcal{W},ABC}\sigma^\dagger \ket{\bar{\psi}(K,\sigma)}\bra{\bar{\psi}(K,\sigma)}\right].
\EE
The state $\sigma^\dagger\ket{\bar{\psi}(K,\sigma)}_{ABC}$ for $\sigma \in \Sigma_A$ is given by
\begin{eqnarray}
\sigma^\dagger\ket{\bar{\psi}(K,\sigma)}_{ABC}&=&\frac{1}{\sqrt{N!}}\sum\limits_{\sigma_\gamma\in \Sigma_A}{\sigma}^\dagger{\sigma}_\gamma^\dagger U_K{\sigma}_\gamma\sigma O_K{\sigma}^\dagger\cdots {\sigma}_\gamma^\dagger U_1 {\sigma}_\gamma \sigma O_1{\sigma}^\dagger   {\sigma}_\gamma^\dagger\ket{\psi(0)}_{AB}\ket{\gamma}_C\\
&=&\frac{1}{\sqrt{N!}}\sum\limits_{\tilde{\sigma}_\gamma\in \Sigma_A}\tilde{\sigma}_\gamma^\dagger U_K{\tilde{\sigma}_\gamma}O_K\cdots {\tilde{\sigma}_\gamma}^\dagger U_1 \tilde{\sigma}_\gamma O_1   {\tilde{\sigma}_\gamma}^\dagger\ket{\psi(0)}_{AB}\ket{\gamma}_C
\end{eqnarray}
where we used
\BE
\sigma\sum\limits_{\sigma_\gamma\in\Sigma_A}\sigma_\gamma =\sum\limits_{\tilde{\sigma}_\gamma\in\Sigma_A}\tilde{\sigma}_\gamma \quad \forall \sigma\in\Sigma_A
\EE
because $\Sigma_A$ is a symmetric group. As a consequence, the application of the permutation $\sigma^\dagger$ on $\ket{\bar{\psi}(K)}$ is equivalent to a relabeling of the permutations $\sigma_\gamma$ such that we now apply the permutation $\tilde{\sigma}_\gamma^\dagger=\sigma^\dagger\sigma_\gamma^\dagger $ instead of $\sigma_\gamma^\dagger$ if subsystem $C$ is in state $\ket{\gamma}_C$. However, these labels have been arbitrary and therefore we find for the success probability 
\begin{eqnarray}
p_{\bar{T}}(\sigma)&=&\Tr_{ABC}\left[ P_{\mathcal{W},ABC}\sigma^\dagger \ket{\bar{\psi}(K,\sigma)}\bra{\bar{\psi}(K,\sigma)}\sigma\right]\\
&=&\frac{1}{N!}\sum\limits_{\tilde{\sigma} \in \Sigma_{\mathcal{H}_A}}\Tr_{AB}\left[\tilde{\sigma}P_{\mathcal{W},AB}{\tilde{\sigma}}^\dagger \ket{\psi(K),\tilde{\sigma}}_{AB}\bra{\psi(K),\tilde{\sigma}}_{AB} 
\right]\\
&=&\bar{p}_T
\end{eqnarray}
\qed
\end{proof}

The relabeling can be formalized in the following way. We define the index $\tilde{\gamma}$ via $\sigma \sigma_\gamma= \sigma_{\tilde{\gamma}}$. Then, we can define the permutation $\pi(\sigma)$ acting on $\mathcal{H}_C$ via
\BE
\pi(\sigma)\ket{\gamma}_C=\ket{\tilde{\gamma}}_C
\EE
which then leads to the following lemma:

\begin{lemma}
The averaged strategy $\bar{T}$ is permutation invariant under joined permutations $\sigma\otimes \pi(\sigma)$  $\forall \sigma\in \Sigma_{A}$  such that
\begin{eqnarray}
[\bar{U}_k,\sigma\otimes \pi(\sigma)]&=&0\\
\sigma\otimes \pi(\sigma)\ket{\bar{\psi}(0)}&=&\ket{\bar{\psi}(0)}.
\end{eqnarray}
\label{lemma:perm_inv}
\end{lemma}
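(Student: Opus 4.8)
The plan is to verify both identities by direct substitution of the definitions of $\bar{U}_k$ and $\ket{\bar{\psi}(0)}$, the only nontrivial ingredient being that $\Sigma_A$ is a group, so that multiplying every element by a fixed $\sigma$ merely permutes the summation index (the rearrangement theorem). The conceptual point is that the label register $C$ was introduced precisely so that $\pi(\sigma)$ can undo, on the $C$-side, the relabeling of the conjugating permutations $\sigma_\gamma$ that $\sigma$ forces on the $A$-side. As a preliminary, I would note that $\pi(\sigma)$ is a genuine unitary: since $\gamma\mapsto\tilde{\gamma}$ is a bijection of the index set (multiplication by $\sigma$ permutes $\Sigma_A$), $\pi(\sigma)$ simply permutes the orthonormal basis $\{\ket{\gamma}_C\}$, so $\sigma\otimes\pi(\sigma)$ is a legitimate symmetry to test.

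For the invariance of the initial state I would apply $\sigma\otimes\pi(\sigma)$ termwise to $\ket{\bar{\psi}(0)}=\frac{1}{\sqrt{N!}}\sum_\gamma \sigma_\gamma^\dagger\ket{\psi(0)}_{AB}\ket{\gamma}_C$: the $A$-factor of the $\gamma$-th term becomes $\sigma\sigma_\gamma^\dagger\ket{\psi(0)}$ and the $C$-factor becomes $\ket{\tilde{\gamma}}_C$. Provided $\tilde{\gamma}$ is the label fixed by the relation $\sigma_{\tilde{\gamma}}^\dagger=\sigma\sigma_\gamma^\dagger$, the $\gamma$-th term is carried exactly onto the $\tilde{\gamma}$-th summand of the original state; since $\gamma\mapsto\tilde{\gamma}$ is a bijection, the full sum is reproduced and $\sigma\otimes\pi(\sigma)\ket{\bar{\psi}(0)}=\ket{\bar{\psi}(0)}$.

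For the commutator I would conjugate $\bar{U}_k=\sum_\gamma \sigma_\gamma^\dagger U_k\sigma_\gamma\otimes\ket{\gamma}_C\bra{\gamma}$ by $\sigma\otimes\pi(\sigma)$. The $C$-block $\ket{\gamma}_C\bra{\gamma}$ maps to $\ket{\tilde{\gamma}}_C\bra{\tilde{\gamma}}$, while the $A$-block becomes $\sigma\sigma_\gamma^\dagger U_k\sigma_\gamma\sigma^\dagger=(\sigma\sigma_\gamma^\dagger)U_k(\sigma\sigma_\gamma^\dagger)^\dagger=\sigma_{\tilde{\gamma}}^\dagger U_k\sigma_{\tilde{\gamma}}$ under the same relation. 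Thus the $\gamma$-th term of $\bar{U}_k$ is sent to its $\tilde{\gamma}$-th term, and reindexing the (bijective) sum returns $\bar{U}_k$, i.e. $(\sigma\otimes\pi(\sigma))\bar{U}_k(\sigma\otimes\pi(\sigma))^\dagger=\bar{U}_k$, which is the stated commutation. Both identities therefore collapse to one and the same reindexing argument.

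The one place I expect to need care — and the main bookkeeping obstacle — is keeping the left/right multiplications and the Hermitian conjugates mutually consistent, so that the permutation $\pi(\sigma)$ acting on $C$ is exactly the inverse of the relabeling induced on the conjugating permutations in $A$ (concretely, that $\tilde{\gamma}$ is the label with $\sigma_{\tilde{\gamma}}=\sigma_\gamma\sigma^\dagger$). Once $\pi(\sigma)$ is pinned down in this way, there is no residual computation. I would close by remarking that this joint permutation symmetry is precisely the structural input invoked in Appendix \ref{app:symmetries} to reduce the optimization over arbitrary strategies to the few angles $\alpha$, $\phi$, and $\Delta$.
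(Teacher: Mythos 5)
Your proof is correct and follows essentially the same route as the paper's own: both verify the two identities by direct substitution of the definitions and a single group-rearrangement reindexing $\gamma\mapsto\tilde{\gamma}$ over $\Sigma_A$. Your careful pinning of the relabeling as $\sigma_{\tilde{\gamma}}=\sigma_\gamma\sigma^\dagger$ is in fact the convention consistent with the paper's daggered definitions of $\ket{\bar{\psi}(0)}$ and $\bar{U}_k$, whereas the paper's displayed proof silently switches to the dagger-free convention with $\sigma_{\tilde{\gamma}}=\sigma\sigma_\gamma$ --- an immaterial relabeling of the group elements.
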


\begin{proof}
For the symmetric initial state $\ket{\bar{\psi}(0)}$, we find
\begin{eqnarray}
\sigma\otimes\pi(\sigma)\ket{\bar{\psi}(0)}_{ABC} &=& \sigma\otimes \pi(\sigma) \frac{1}{\sqrt{N!}}\sum\limits_{\sigma_\gamma\in\Sigma_A} \sigma_\gamma\ket{\psi(0)}_{AB}\ket{\gamma}_C\\
&=& \frac{1}{\sqrt{N!}}\sum\limits_{\sigma_\gamma \in \Sigma_A} \sigma\sigma_\gamma\ket{\psi(0)}_{AB}\pi(\sigma)\ket{\gamma}_C\\
&=&\frac{1}{\sqrt{N!}}\sum\limits_{\sigma_{\tilde{\gamma}} \in \Sigma_A} \sigma_{\tilde{\gamma}}\ket{\psi(0)}_{AB}\ket{\tilde{\gamma}}_C=\ket{\bar{\psi}(0)}_{ABC}.
\end{eqnarray}
For the symmetric unitaries $\bar{U}_k$ we find
\begin{eqnarray}
\sigma\otimes \pi(\sigma)\bar{U}_k{\sigma}^\dagger\otimes \pi^\dagger(\sigma)&=&
\sigma\otimes \pi(\sigma)\left(\sum\limits_{\sigma_\gamma \in \Sigma_A} \sigma_\gamma U_{k,AB} \sigma_\gamma^\dagger \otimes \ket{\gamma}_C\bra{\gamma}\right){\sigma}^\dagger\otimes \pi^\dagger(\sigma)\\
&=&\sum\limits_{\sigma_{\tilde{\gamma}} \in \Sigma_A} \sigma_{\tilde{\gamma}} U_{k,AB} \sigma_{\tilde{\gamma}}^\dagger \otimes \ket{\tilde{\gamma}}_C\bra{\tilde{\gamma}}\\
&=&\bar{U}_k.
\end{eqnarray}
$[\bar{U}_k,\sigma\otimes \pi(\sigma)]=0$ follows immediately since permutation operators are unitary.
\qed
\end{proof}

As a consequence of \lem{lemma:sym_strat} and \lem{lemma:perm_inv}, we can limit the search for the best strategy $T$, optimizing $\bar{p}_T$, to averaged strategies $\bar{T}$ which also optimize the worst case probability $\underset{\sigma}{\text{min }}p_T(\sigma)$ and leads to certain symmetries as outlined in Appendix \ref{app:symmetries}. 


\section{Symmetry investigations for the changing oracle problem\label{app:symmetries}}
In the following, we consider a search problem, where the oracle $O_k$ changes at a certain time step. Thus we can separate the search into two phases. The first phase contains $K$ oracle queries to oracle $\tilde{O}=O_k$ for $1\leq k\leq K$ with winning space $\tilde{\mathcal{W}}$ and losing space $\tilde{\mathcal{L}}$. Then, the oracle changes to $O=O_k$ for $K<k\leq J$ with the new winning space ${\mathcal{W}}$ and losing space ${\mathcal{L}}$ and the search is continued by another $J$ queries to ${O}$. In addition, we restrict the problem to monotonically increasing winning spaces that is the winning space $\tilde{\mathcal{W}}$ of the first phase is a subset $\tilde{\mathcal{W}}\subseteq \mathcal{W}$ of the winning space $\mathcal{W}$ of the second oracle $O$. This automatically leads to $\tilde{\mathcal{L}}\supseteq \mathcal{L}$.

In the following, we investigate the appearing symmetries occurring during the first and second phase when applying averaged search strategies $\bar{T}$ to this problem. Since we only consider averaged strategies and thus averaged unitaries $\bar{U}_k$ and states $\ket{\bar{\psi}(k)}$, we omit the bar  on all states and unitaries in this section to simplify the notation.

In the following, we investigate the symmetry properties of the states
\begin{eqnarray}
\ket{\psi(K)}&=&U_K O_K \cdots U_1 O_1\ket{\psi(0)} \\
\ket{\psi(K+J)}&=&U_{K+J}O_{K+J}\cdots U_{K+1}O_{K+1}\ket{\psi(K)}
\end{eqnarray}
at the end of the first and the second phase. This will allow us to determine an upper bound for the average success probability ${p}$.

We define the set of permutations (operators) $\Sigma_{\tilde{O}}=\Sigma_{\tilde{\mathcal{W}}}\cup \Sigma_{\tilde{\mathcal{L}}}$ as the complete set of permutations operators which leave the winning space $\tilde{\mathcal{W}}$ and losing space $\tilde{\mathcal{L}}$ invariant. As a consequence, we find $[\tilde{O},\sigma]=0\,\forall \sigma\in \Sigma_{\tilde{O}}$. 
The initial state $\ket{\psi(0)}$ and all unitaries $U_k$ and $\tilde{O}_k$ during the first phase are permutation invariant under $\sigma\otimes \pi(\sigma)$ $ \forall \sigma\in \Sigma_{\tilde{O}}$ since
$\Sigma_{\tilde{O}}\subseteq \Sigma_{\mathcal{H}_A}$. Thus, the state $\ket{\psi(K)}$ 
at the end of the first phase is also permutation invariant under $\sigma\otimes \pi(\sigma)$ $ \forall \sigma\in \Sigma_{\tilde{O}}$. 

To determine the symmetry properties of $\ket{\psi(K+J)}$ we need to investigate how the winning and losing components of $\ket{\psi(K)}$ changes when we change the oracle.
We define the normalized winning $\ket{\tilde{w}}$ and losing component $\ket{\tilde{\ell}}$ of $\ket{\psi(K)}$ via
\begin{eqnarray}
\cos\alpha\ket{\tilde{\ell}}&=&P_{\tilde{\mathcal{L}}}\ket{\psi(K)}\\
\sin\alpha\ket{\tilde{w}}&=&P_{\tilde{\mathcal{W}}}\ket{\psi(K)}
\end{eqnarray}
with $\cos \alpha = |P_{\tilde{\mathcal{L}}}\ket{\psi(K)}|$. As a consequence, $\ket{\psi(K)}$ can be decomposed via
\BE
\ket{\psi(K)}=\cos \alpha \ket{\tilde{\ell}}+\sin{\alpha}\ket{\tilde{w}}.\label{eq:psi_1}
\EE
The components $\ket{\tilde{w}}$ and $\ket{\tilde{\ell}}$ are permutation invariant under $\sigma\otimes \pi(\sigma)$ $ \forall \sigma\in \Sigma_{\tilde{O}}$ because the projectors $P_{\tilde{\mathcal{W}}}$ and $P_{\tilde{\mathcal{L}}}$ as well as $\ket{\psi(K)}$ are permutation invariant. 

Let us now investigate the winning and losing components at the beginning of the second phase. The initial state of the second phase is given by $\ket{\psi(K)}$. Its component $\ket{\tilde{w}}$ is also a winning component according to the second oracle $O$ such that $P_{\mathcal{W}}\ket{\tilde{w}}=\ket{\tilde{w}}$. However, $\ket{\tilde{\ell}}$ contains both winning and losing components
\begin{eqnarray}
\cos \beta\ket{\ell}&=&P_{{\mathcal{L}}}\ket{\tilde{\ell}}\\
\sin \beta\ket{w_+}&=&P_{{\mathcal{W}}}\ket{\tilde{\ell}}
\end{eqnarray}
with $\cos \beta= |P_{{\mathcal{L}}}\ket{\tilde{\ell}}|$. Note, that
 $\ket{w_+} \in \mathcal{W}_+=\tilde{\mathcal{L}}\cap \mathcal{W}$ and thus $\ket{w_+}\perp \ket{\tilde{w}}$.  Therefore, we can divide the state $\ket{\psi(K)}$ into three orthogonal components via
\BE
\ket{\psi(K)}=\sin\alpha \ket{\tilde{w}}+\cos\alpha \Big(\sin \beta \ket{w_+}+\cos \beta \ket{\ell}\Big).\label{eq:psi_K_1}
\EE
The angle $\beta$ is given by
\BE
\sin \beta = \sqrt{\frac{n_+}{n_++n_\ell}}\label{eq:beta}
\EE
where $n_+$ denotes the dimension of $\mathcal{W}_+$ and $n_\ell$ the dimension of $\mathcal{L}$ (see Appendix \ref{app:beta}).

Let us now invest the symmetries of $\ket{\psi(K)}$ with respect to permutations $\sigma\otimes \pi(\sigma)$ $\forall \sigma\in \Sigma_{O}$ which leave the second oracle $O$ invariant. Let 
$P_\mathcal{S}$ be the projector onto the symmetric subspace which can be written as
\BE
P_\mathcal{S}= \sum\limits_{\ket{s}}\ket{s}\bra{s} \quad \text{with} \quad \sigma\otimes \pi(\sigma)\ket{s}=\ket{s}\quad \forall \sigma\in \Sigma_{O}
\EE
where $\lbrace \ket{s}\rbrace$ forms an orthonormal basis of the symmetric subspace. 
Then, we can define the symmetric component 
\BE
\cos \varepsilon\ket{\phi_s}=P_\mathcal{S}\ket{\psi(K)}\label{eq:phi_sym1}
\EE 
and its complement
\BE
\sin \varepsilon\ket{\phi_\perp}=(\mathds{1}-P_\mathcal{S})\ket{\psi(K)}
\EE
with $\cos \varepsilon=|P_\mathcal{S}\ket{\psi(K)}|$.
The state $\ket{\ell}$ is permutation invariant under $\sigma\otimes \pi(\sigma)$ $ \forall \sigma_A\in \Sigma_{{O}}$ since $\mathcal{L}\subseteq\tilde{\mathcal{L}}$ such that $P_\mathcal{S}\ket{\ell}=\ket{\ell}$. However, the (not normalized) winning component $\sin\alpha \ket{\tilde{w}}+\cos\alpha \sin \beta \ket{w_+}$ is not necessarily permutation invariant under $\sigma\otimes \pi(\sigma)$ $ \forall \sigma\in \Sigma_{{O}}$. As a consequence, there might exist a non-vanishing component $\ket{\phi_\perp}$, however, this component lies within the winning space $\mathcal{W}$ such that
\BE
\sin \varepsilon\ket{\phi_\perp}=(\mathds{1}-P_S)\ket{\psi(K)}=P_\mathcal{W}(\mathds{1}-P_S)\ket{\psi(K)}=\sin \varepsilon\ket{w_\perp}
\EE
The symmetric component $\ket{\phi_S}$ can be decomposed into a winning and a losing component
\begin{eqnarray}
\cos \varepsilon \sin \phi\ket{w_s}&=&P_\mathcal{W}P_\mathcal{S}\ket{\psi(K)}\label{eq:phi_sym2}\\
\cos \varepsilon \cos \phi\ket{\ell_s}&=&P_\mathcal{L}P_\mathcal{S}\ket{\psi(K)}=\cos \varepsilon \cos \phi\ket{\ell}\label{eq:phi_sym3}
\end{eqnarray}
with $\cos\varepsilon\sin \phi=|P_\mathcal{W}P_\mathcal{S}\ket{\psi(K)}|$.
Thus the state $\ket{\psi(K)}$ can be separated into the following three orthogonal components
\BE
\ket{\psi(K)}=\cos{\varepsilon}\Big(\sin{\phi}\ket{w_s}+\cos \phi\ket{\ell}\Big  )+\sin{\varepsilon}\ket{w_\perp}.
\EE
A comparison with \eq{eq:psi_K_1} leads to the following identities
\begin{eqnarray}
\cos \varepsilon \cos\phi &=& \braket{\ell}{\psi(K)}= \cos \alpha \cos \beta \label{eq:cos_phi}\\
\cos \varepsilon \sin \phi&=&\braket{w_s}{\psi(K)}= \sin\alpha \braket{w_s}{\tilde{w}}+\cos\alpha\sin\beta\braket{w_s}{{w}_+}\label{eq:sin_phi}\\
\sin\varepsilon &=&\braket{w_\perp}{\psi(K)}= \sin\alpha \braket{w_\perp}{\tilde{w}}+\cos\alpha\sin\beta\braket{w_\perp}{{w}_+}.
\end{eqnarray}
Note, all appearing scalar products are real due to the definition of $\ket{w_s}$ and $\ket{w_\perp}$ and they are upper bounded via
\begin{eqnarray}
|\braket{w_s}{\tilde{w}}|&\leq& \sqrt{\frac{\tilde{n}}{\tilde{n}+n_+}}\\
|\braket{w_s}{{w}_+}|&\leq& \sqrt{\frac{{n}_+}{{n}_\ell+n_+}}.
\end{eqnarray}
As a consequence, the angle $\phi$ is upper bounded by the angle $\alpha$ via
\BE
\tan \phi \leq \tan \alpha \sqrt{\frac{\tilde{n}(n_++n_\ell)}{(\tilde{n}+n_+)n_\ell}}+\sqrt{\frac{n_+^2}{(\tilde{n}+n_+)n_\ell}}.\label{eq:phi}
\EE

Let us investigate the time evolution during the second phase. We denote with
\BE
V=U_{K+J}O \cdots U_{K+1}O\label{eq:V}
\EE
a unitary which described the complete time evolution during the second phase. The unitary $V$ commutes with the projector $P_\mathcal{S}$ as the following considerations will prove. There exist a joined eigenbasis of $V$ and $\sigma\otimes \pi(\sigma)$ $\forall \sigma\in \Sigma_O$   since $[V,\sigma\otimes \pi(\sigma)]=0$. Let $\lbrace \ket{v_x}\rbrace $ be an eigenbasis of $V$ and wlog we assume that the first $f$ states of this basis form the symmetric subspace such that
\BE
\sigma\otimes \pi(\sigma)\ket{v_x}=\ket{v_x} \quad \forall \sigma \in \Sigma_O \text{ and }1\leq x\leq f.
\EE
As a consequence, we find
\BE
P_\mathcal{S}V=\sum\limits_{x=1}^f \ket{v_x}\bra{v_x}\sum\limits_y \lambda_y \ket{v_y}\bra{v_y}=\sum\limits_{x=1}^f \lambda_x\ket{v_x}\bra{v_x}=VP_\mathcal{S}
\EE
where $\lambda_y$ denote the eigenvalues of $V$. Thus the time evolution of the symmetric component $V\ket{\phi_S}$ stays a symmetric state with
\BE
P_\mathcal{S}V\ket{\phi_S}=VP_\mathcal{S}\ket{\phi_S}=V\ket{\phi_S}
\EE
whereas $V\ket{\phi_\perp}$ stays orthogonal to this subspace since
\BE
P_\mathcal{S}V\ket{\phi_\perp}=VP_\mathcal{S}\ket{\phi_\perp}=0
\EE 
and thus the symmetric part and the orthogonal part do not mix.

The winning probability of $\ket{\psi(K+J)}$ can be decomposed into a symmetric part and a part orthogonal to it via
\begin{eqnarray}
\Tr\left[P_{\mathcal{W}}\ket{\psi(K+J)}\bra{\psi(K+J)}\right]&=&\Tr\left[P_{\mathcal{W}}P_\mathcal{S}\ket{\psi(K+J)}\bra{\psi(K+J)}\right]\nonumber\\
&&+\Tr\left[P_{\mathcal{W}}(\mathds{1}-P_\mathcal{S})\ket{\psi(K+J)}\bra{\psi(K+J)}\right]\\
&=&\cos^2 \varepsilon \Tr\left[P_{\mathcal{W}} V\ket{\phi_s}\bra{\phi_s}V^\dagger\right] \nonumber \\
&&+\sin^2 \varepsilon \Tr\left[P_{\mathcal{W}} V\ket{\phi_\perp}\bra{\phi_\perp}V^\dagger\right]\label{eq:pspp}
\end{eqnarray}
where we used $[P_\mathcal{W},P_\mathcal{S}]=0$ which follows directly from $[P_\mathcal{W},\sigma\otimes \pi(\sigma)]=0$ $\forall \sigma\in \Sigma_O$ and $P_S=P_S^2$.


\section{Determining the angle $\beta$ \label{app:beta}}
In the following, we give a more detailed derivation of \eq{eq:beta} for determining $\beta$ defined via
\BE
\sin^2 \beta = \bra{\tilde{\ell}}P_\mathcal{W}\ket{\tilde{\ell}}.
\EE
Let wlog $\lbrace \ket{j}_A\rbrace$ with $1\leq j\leq n_+ + n_\ell$ be a basis of the losing space $\tilde{\mathcal{L}}_A$. The state $\ket{\tilde{\ell}}_{ABC}$ can then be written as
\BE
\ket{\tilde{\ell}}=\sum\limits_{j=1}^{n_++n_\ell}\xi_j \ket{j}_A\ket{\gamma_j}_{BC}
\EE
with some arbitrary normalized states $\ket{\gamma_j}_{BC}$. The probability for each state $\ket{j}_A$ is given by
\BE
p_j=|| \;\ket{j}_A\bra{j}\otimes \mathds{1}_{BC}\ket{\tilde{\ell}}||^2=|\xi_j|^2.
\EE
However, the state $\ket{\tilde{\ell}}$ is permutation invariant  $\forall \sigma \in \Sigma_{\tilde{\mathcal{L}}}$ such that
\begin{eqnarray}
\sigma\otimes\pi(\sigma)\ket{\tilde{\ell}}&=&\sum\limits_{j=1}^{n_++n_\ell}\xi_j \ket{j'(\sigma)}_A\ket{\gamma'_j}_{BC}=\ket{\tilde{\ell}}
\end{eqnarray}
with $\ket{\gamma'_j}_{BC}=\mathds{1}_B\otimes\pi_C\ket{\gamma_j}_{BC}$. As a consequence, we find for the probabilities 
\BE
p_j=|| \;\Big(\ket{j}\bra{j}\otimes \mathds{1}_{BC}\Big)\; \Big(\sigma_A\otimes\pi_C(\sigma)\Big)\ket{\tilde{\ell}}||^2=|\xi_{j'}|^2\quad \forall\quad 1\leq j,j'\leq n_++n_\ell
\EE
and due to normalization $p_j=1/(n_++n_\ell)$. Since there exist $n_+$ orthonormal states within the subspace $\mathcal{W}_+=\mathcal{W}\cap\tilde{\mathcal{L}}$ we find
\BE
\sin^2\beta = \frac{n_+}{n_++n_\ell}.
\EE
\section{Optimality proof of Grover's algorithm for multiple winning items\label{sec:max_diff}}

The optimality proof of Grover's algorithm for oracles with a single winning item by Zalka \cite{Zalka1999} consist of two parts given by the inequality
\BE
2N-2\sqrt{Np}-2\sqrt{N(N-1)(1-p)}\leq \sum\limits_{y=1}^N|| \ket{\phi_J}-\ket{\phi_J^y}||^2\leq 4N\sin^2(J\psi).\label{eq:zalka}
\EE
Here, $N$ is the number of items, $p$ the success probability to identify the single winning item $y$ correctly, $J$ the maximal number of oracle queries  and the angle $\psi$ is defined via $\sin^2\psi=1/N$. 
The two quantum states $\ket{\phi_j}$ and $\ket{\phi_j^y}$ are defined via
\begin{eqnarray}
\ket{\phi_j}=V^j\ket{\phi}\\
\ket{\phi_j^y}=V^j_y\ket{\phi}
\end{eqnarray}	
where $\ket{\phi}$ is some arbitrary state, $V_y^j$ a unitary of the form \eq{eq:V} based on $j$ queries to the oracle $O_y$  and  $V^j$ is a unitary based on  $j$ queries to an empty oracle. The optimality of Grover's algorithm follows from the proof of both inequalities and the fact that Grover's algorithm saturates both.

We generalize the results from Zalka by going to oracles $O_y$ which mark exactly $n$ items out of $N$ items as winning. In this case, $y$ is now a label for the winning space $\mathcal{W}_y$ and there exist now $D={N \choose n}$ different oracles. The success probability $p$ now denotes the probability to identify any winning item $\ket{z}\in \mathcal{W}_y$ correctly. For a random guess, this probability is given by $\sin^2\nu=n/N$. As a consequence, \eq{eq:zalka} can be generalized to
\BE
2D-2D\sqrt{p\frac{n}{N}}-2D\sqrt{(1-p)\left(1-\frac{n}{N}\right)}\leq \sum\limits_{y=1}^D|| \ket{\phi_J}-\ket{\phi_J^y}||^2\leq 4D\sin^2(J\nu) \label{eq:woelk}
\EE
which we will proof in the following and is equal to \eq{eq:zalka} for $n=1$. Again, Grover's algorithm saturates these bounds.

We start with the right inequality and proof the following lemma

\begin{lemma}
The maximal difference between $\ket{\phi_J}$ and $\ket{\phi_J^y}$ achievable with $J$ oracle queries averaged over all possible oracles with $n$ winning items is given by
\BE
\frac{1}{D}\sum\limits_{y=1}^D|| \ket{\phi_J}-\ket{\phi_J^y}||^2\leq 4\sin^2(J\psi)=2[1-\cos(2J\nu)]\label{eq:max_diff}
\EE
with $\sin^2\nu=n/N$.
\label{lemma:distance}
\end{lemma}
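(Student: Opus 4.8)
The plan is to follow Zalka's original argument for the single-marked-item case and track how the combinatorics change when each oracle marks exactly $n$ items. I would fix an arbitrary starting state $\ket{\phi}$ and the empty-oracle evolution $\ket{\phi_J}=V^J\ket{\phi}$, and bound the average squared distance $\frac{1}{D}\sum_y \|\ket{\phi_J}-\ket{\phi_J^y}\|^2$, where now $y$ indexes the $D=\binom{N}{n}$ possible winning sets rather than single items. The core of Zalka's estimate is to bound how much a single oracle query can displace a state from its empty-oracle counterpart, and then to accumulate these displacements over $J$ queries by a triangle-type (hybrid) argument.

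First I would set up the hybrid telescoping: write $\ket{\phi_J}-\ket{\phi_J^y}$ as a sum of $J$ terms, each isolating the effect of replacing the empty oracle by $O_y$ at one time step while keeping the evolution otherwise fixed. Since the intervening unitaries are norm-preserving, each term reduces to the one-query displacement $\|(O_y-\mathds{1})\ket{\chi}\|$ for some intermediate normalized state $\ket{\chi}$. The oracle $O_y$ flips the phase exactly on the $n$-dimensional winning subspace $\mathcal{W}_y$, so $(O_y-\mathds{1})\ket{\chi} = -2 P_{\mathcal{W}_y}\ket{\chi}$, giving a single-step squared displacement of $4\,\bra{\chi}P_{\mathcal{W}_y}\ket{\chi}$. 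The crucial averaging step is that $\frac{1}{D}\sum_y P_{\mathcal{W}_y} = \frac{n}{N}\mathds{1}$ by symmetry (each basis state lies in the same number of winning sets), so the query-averaged single-step squared displacement is exactly $4n/N = 4\sin^2\nu$. Summing the $J$ hybrid contributions and applying Cauchy--Schwarz to convert the sum of norms into the bound $4\sin^2(J\nu)$ — exactly as Zalka does, but now with $\sin^2\psi$ replaced by $\sin^2\nu = n/N$ — yields the claimed $4\sin^2(J\psi)=2[1-\cos(2J\nu)]$ after the angle-doubling identity.

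The main obstacle I expect is the same subtlety Zalka faces: the naive hybrid bound gives only a sum of per-step displacements, and squeezing this into $4\sin^2(J\nu)$ rather than a weaker linear-in-$J$ bound $4J^2\sin^2\nu$ requires the tighter accounting that exploits the fact that successive displacement vectors cannot all add up in phase. Concretely, one must control cross terms between different query positions, which is where Zalka's geometric argument (bounding the total rotation angle rather than summing independent errors) enters. The generalization to $n>1$ does not change this geometric structure — it only rescales the elementary rotation angle from $\psi$ to $\nu$ — so the plan is to check that every step of Zalka's inequality chain survives verbatim once the replacement $1/N \to n/N$ is made in the averaging identity $\frac{1}{D}\sum_y P_{\mathcal{W}_y} = \frac{n}{N}\mathds{1}$, and that the doubling $2[1-\cos(2J\nu)]$ follows from $\sin^2(J\psi)=\sin^2(J\nu)$ with the half-angle formula. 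I would close by noting that Grover's algorithm, which rotates by precisely $2\nu$ per query within the two-dimensional $\{\ket{w},\ket{\ell}\}$ plane, saturates this bound, establishing optimality.
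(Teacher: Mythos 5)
Your proposal is correct and takes essentially the same route as the paper: both defer the hard geometric step (tightening the naive $4J^2\sin^2\nu$ hybrid bound to $4\sin^2(J\nu)$) to Zalka's inequality chain with its concave function $f$, and both supply the same single new ingredient for $n>1$, namely the averaging identity $\frac{1}{D}\sum_{y} P_{\mathcal{W}_y}=\frac{n}{N}\mathds{1}$, which the paper phrases as each item lying in exactly $d=\binom{N-1}{n-1}$ of the $D=\binom{N}{n}$ winning sets so that $d/D=n/N$, giving $x=4J^2\sin^2\nu$ and hence $f(x)=4\sin^2(J\nu)$, with Grover saturating the bound. One small caution: your passing claim that Cauchy--Schwarz converts the telescoped sum into $4\sin^2(J\nu)$ is not literally right (it yields only the weaker $4J^2\sin^2\nu$), but you identify and repair exactly this point in your final paragraph, just as the paper does by invoking Zalka's Eq.~(22).
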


\begin{proof}
This Lemma follows directly from the optimality proof of Grover's algorithm given in \cite{Zalka1999} by generalizing the sum overall possible oracles which mark only one item $y$ to all possible oracles which mark $n$ items.  In the following, we do not reproduce every step from Ref. \cite{Zalka1999} but concentrate only on steps where the generalization from one winning item to several winning items makes a difference.   Following Ref.  \cite{Zalka1999} we find (eq. 22) 
\BE
\frac{1}{D}\sum\limits_{y=1}^D || \ket{\phi_J}-\ket{\phi_J^y}||^2 \leq Df(x) \label{eq:zalka2}
\EE
with the argument
\BE
x=\frac{4J}{D}\sum\limits_{y=1}^D\sum\limits_{j=1}^J ||P_{\mathcal{W}_y}\ket{\phi_j}||^2
\EE
and $P_{\mathcal{W}_y}$ the projector onto the winning space of oracle $y$.The function $f(x)$ is defined in  \cite{Zalka1999} via 
\BE
f\Big(x=4J^2\sin^2\nu\Big)= 4\sin^2(J\nu).
\EE 
Every state $\ket{z}\in \mathcal{H}_A$ is part of the winning space $\mathcal{W}_y$ for exactly $d={{N-1}\choose {n-1}}$ different oracles. As a consequence, the argument $x$ of the function $f$ in \eq{eq:zalka2} is given by
\begin{eqnarray}
x=\frac{4J}{D}\sum\limits_{y=1}^D\sum\limits_{j=1}^J ||P_{\mathcal{W}_y}\ket{\phi_j}||^2
&=&\frac{4J}{D}\sum\limits_{j=1}^J\sum\limits_{y=1}^D \sum\limits_{z\in \mathcal{W}_y}||\braket{z}{\phi_j}||^2\\
& &= 4J\sum\limits_{j=1}^J \frac{d}{D} \sum\limits_{z=1}^N ||\braket{z}{\phi_j}||^2.
\end{eqnarray}
The sum over all states $\ket{z}$ sums up to unity leading to 
\BE
\frac{4J}{D}\sum\limits_{y=1}^D\sum\limits_{j=1}^J ||P_{\mathcal{W}_y}\ket{\phi_j}||^2 = 4J^2\frac{n}{N}=4J^2\sin^2 \nu
\EE
where we used $d/D=n/N$.
\qed
\end{proof}

Grover's algorithm saturates this inequality since we find for this algorithm
\begin{eqnarray}
\ket{\phi}&=&\frac{1}{\sqrt{N}}\sum\limits_{z=1}^N \ket{z}=\cos \nu \ket{\ell}+ \sin \nu\ket{w}\\
\ket{\phi_J^y}&=&\cos [(2J+1)\nu] \ket{\ell}+ \sin [(2J+1)\nu] \ket{w}\\
\ket{\phi_J}&=&\cos \nu \ket{\ell}+ \sin \nu \ket{w}
\end{eqnarray}
leading to
\begin{eqnarray}
\frac{1}{D}\sum\limits_{y=1}^D|| \ket{\phi_J}-\ket{\phi_J^y}||^2&=&2-2 \sin[(2J+1)\nu]\sin(\nu)-2\cos[(2J+1)\nu]\cos(\nu)\\
&=&2[ 1-\cos(2J\nu)].
\end{eqnarray}

The right side of \eq{eq:woelk} is govern by the lemma
\begin{lemma}
The average success probability $p$ to identify any item $z\in \mathcal{W}_y$ out of the winning space $\mathcal{W}_y$ of the oracle $O_y$ with  $1\leq y\leq D$ given the states $\phi^y_J$ average over all oracles is upper bounded by
\BE
2-2\sqrt{p\frac{n}{N}}-2\sqrt{(1-p)\left(1-\frac{n}{N}\right)}\leq \frac{1}{D}\sum\limits_{y=1}^D|| \ket{\phi_J}-\ket{\phi^y_J}||^2
\EE \label{lem:zalka2}
\end{lemma}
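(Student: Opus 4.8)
The plan is to turn the claimed lower bound on the averaged distance into an \emph{upper} bound on an averaged overlap. Writing $\|\ket{\phi_J}-\ket{\phi_J^y}\|^2 = 2-2\Re\braket{\phi_J}{\phi_J^y}$ and averaging over the $D$ oracles, the inequality to be proven is equivalent to
\BE
\frac{1}{D}\sum_{y=1}^D \Re\braket{\phi_J}{\phi_J^y}\leq \sqrt{p\tfrac{n}{N}}+\sqrt{(1-p)\left(1-\tfrac{n}{N}\right)}.
\EE
The structural fact I would exploit is that $\ket{\phi_J}=V^J\ket{\phi}$ is the empty-oracle state and hence does not depend on $y$, while all the $y$-dependence sits in $\ket{\phi_J^y}=V_y^J\ket{\phi}$ and in the winning subspace $\mathcal{W}_y$.

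For each fixed $y$ I would split the overlap using $\mathds{1}=P_{\mathcal{W}_y}+P_{\mathcal{L}_y}$. Applying the triangle inequality and then Cauchy--Schwarz to each of the two resulting terms (using $P_{\mathcal{W}_y}=P_{\mathcal{W}_y}^2=P_{\mathcal{W}_y}^\dagger$, and likewise for $P_{\mathcal{L}_y}$) gives
\BE
\Re\braket{\phi_J}{\phi_J^y}\leq \|P_{\mathcal{W}_y}\ket{\phi_J}\|\,\sqrt{p_y}+\|P_{\mathcal{L}_y}\ket{\phi_J}\|\,\sqrt{1-p_y},
\EE
where $p_y=\|P_{\mathcal{W}_y}\ket{\phi_J^y}\|^2$ is the success probability for oracle $y$ and $1-p_y=\|P_{\mathcal{L}_y}\ket{\phi_J^y}\|^2$; by definition the average obeys $p=\frac{1}{D}\sum_y p_y$.

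Finally I would average this estimate over $y$ and apply Cauchy--Schwarz a second time, now over the index $y$, treating $\{\|P_{\mathcal{W}_y}\ket{\phi_J}\|\}_y$ and $\{\sqrt{p_y}\}_y$ as two vectors (and analogously for the losing parts). This decouples the sums into $\sqrt{\frac{1}{D}\sum_y\|P_{\mathcal{W}_y}\ket{\phi_J}\|^2}\cdot\sqrt{\frac{1}{D}\sum_y p_y}$ plus its losing counterpart. The remaining ingredient is the same combinatorial identity already used for the right inequality in \lem{lemma:distance}: every basis state $\ket{z}$ lies in a fraction $d/D=n/N$ of the winning sets, so that $\frac{1}{D}\sum_y\|P_{\mathcal{W}_y}\ket{\phi_J}\|^2=n/N$ and, since $\ket{\phi_J}$ is normalized, $\frac{1}{D}\sum_y\|P_{\mathcal{L}_y}\ket{\phi_J}\|^2=1-n/N$. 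Substituting $p=\frac{1}{D}\sum_y p_y$ then reproduces exactly the stated bound.

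The work here is elementary, so I do not expect a deep obstacle; the main thing to watch is bookkeeping, namely that both Cauchy--Schwarz steps point in the upper-bounding direction and that the averages carry the correct normalizations. A useful consistency check is that Grover's algorithm, for which $p_y$ and $\|P_{\mathcal{W}_y}\ket{\phi_J}\|$ are independent of $y$ and all relevant vectors live in the two-dimensional plane spanned by $\ket{w}$ and $\ket{\ell}$, turns both Cauchy--Schwarz steps into equalities and thus saturates the bound, consistent with \eq{eq:woelk}.
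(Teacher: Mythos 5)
Your proposal is correct, and while its per-oracle step coincides with the paper's, your averaging step takes a genuinely different route. The first inequality you derive, $\Re\braket{\phi_J}{\phi_J^y}\leq\sqrt{a_y p_y}+\sqrt{(1-a_y)(1-p_y)}$ (with $a_y=||P_{\mathcal{W}_y}\ket{\phi_J}||^2$), is exactly the bound the paper imports wholesale from Zalka (his Eq.~(A8)); your derivation via $\mathds{1}=P_{\mathcal{W}_y}+P_{\mathcal{L}_y}$ and Cauchy--Schwarz is a clean self-contained proof of it. Where you diverge is afterwards: the paper follows Zalka in arguing variationally that the averaged distance is minimized when all $p_y=p$ and all $a_y=a$ are equal, and only then fixes $a=d/D=n/N$ by normalization. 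You instead apply Cauchy--Schwarz a second time over the oracle index $y$, decoupling $\frac{1}{D}\sum_y\sqrt{a_yp_y}\leq\sqrt{\bar{a}\bar{p}}$ (and likewise for the losing terms), with $\bar{a}=n/N$ following from the counting identity $\frac{1}{D}\sum_y P_{\mathcal{W}_y}=\frac{d}{D}\mathds{1}=\frac{n}{N}\mathds{1}$. This buys you something: it eliminates the need to justify the ``minimum at equal values'' claim entirely, and it makes the saturation structure transparent, since for Grover both Cauchy--Schwarz applications are manifestly tight, exactly as your consistency check observes.

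One point needs tightening. You define $p_y=||P_{\mathcal{W}_y}\ket{\phi_J^y}||^2$, which hard-codes the final guess to be a computational-basis measurement of the search register. For the optimality argument the lemma must bound the success probability of an \emph{arbitrary} final measurement on states living in a possibly enlarged space $\mathcal{H}_{AB}$; this is why the paper, following Zalka, introduces the outcome subspaces $X_z$ and sets $p_y=\sum_{z\in\mathcal{W}_y}\sum_{x\in X_z}|c_x^y|^2$. Fortunately your argument survives this generalization verbatim: replace $P_{\mathcal{W}_y}$, wherever it acts on $\ket{\phi_J}$ and $\ket{\phi_J^y}$, by the measurement-success projector $\Pi_y=\sum_{z\in\mathcal{W}_y}\sum_{x\in X_z}\ketbra{x}$ (or by $\sqrt{S_y}$ with $S_y=\sum_{z\in\mathcal{W}_y}E_z$ for a general POVM). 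The first Cauchy--Schwarz needs only $\Pi_y=\Pi_y^\dagger=\Pi_y^2$ (respectively $0\leq S_y\leq\mathds{1}$), and the counting identity $\frac{1}{D}\sum_y\Pi_y=\frac{n}{N}\mathds{1}$ still holds because every guess $z$ is winning for exactly a fraction $d/D=n/N$ of the oracles. With that substitution your proof is complete and, to my eye, cleaner than the appendix version.
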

\begin{proof}
Again, in order to proof this lemma, we follow the proof in \cite{Zalka1999} and only point out the generalizations we have to make when going form $n=1$ winning state to $n>1$ winning states. Similar to \cite{Zalka1999}, we write the states
\BE
\ket{\phi^y_J}=\sum\limits_{x=1}^X c_x^y\ket{x} \quad \ket{\phi_J}=\sum\limits_{x=1}^X c_x\ket{x}
\EE 
via some orthonormal basis $\lbrace\ket{x}\rbrace$ of some Hilbert space with dimension $X$. The optimal procedure to identify a winning item $\ket{z}$ is to perform projective measurements (see Ref. \cite{Zalka1999}). Let $\lbrace\ket{x}\rbrace$ be the measurement basis and we denote with $X_z$ the subspace containing all states $\ket{x}$ which correctly denote that $\ket{z}$ is a winning item. As a consequence, the success probability $p_y$ if the unknown oracle is given by $O_y$ is determined via
\BE
p_y=\sum\limits_{z\in \mathcal{W}_y}\sum\limits_{x\in X_z}|c_x^y|^2.
\EE
Similar, we can define a success probability $a_y$ for the state \ket{\phi_J} via
(compare Eq.(A7) in \cite{Zalka1999}) 
\BE
a_y=\sum\limits_{z\in \mathcal{W}_y}\sum\limits_{x\in X_z}|c_x|^2.
\EE 

In order to proof \eq{lem:zalka2} Zalka determines the minimal distance  an arbitrary state $\ket{\phi_y}$ with success probability $p_y$ needs to have from a given state  $\ket{\zeta_y}$ with success probability $a_y$. This minimal distance is given (compare Eq. (A8) in \cite{Zalka1999}) by
\BE
||\ket{\phi_y}-\ket{\zeta_y}||^2\geq 2-2\Big(\sqrt{p_ya_y}+\sqrt{(1-p_y)(1-a_y)}\Big).\label{eq:min_difference}
\EE

The minimum of 
\BE
\frac{1}{D}\sum\limits_{y=1}^D|| \ket{\psi_y}-\ket{\zeta_y}||^2
\EE
for all possibly states $\ket{\zeta_y}$ and success probabilities $p_y$ is reached if all $p_y=p$ and $a_y=a$ (see \cite{Zalka1999}). Due to normalization we find
\BE
\sum\limits_{y=1}^D a_y=d\sum\limits_{x=1}^X |c_x|=d
\EE
where we have used that each item $\ket{z}$ belongs to the winning space of $d={{N-1}\choose {n-1}}$ different oracles. As a consequence, the minimum is achieved for $a_y=d/D=n/N$ (see discussion before Eq.(A10) in \cite{Zalka1999}) leading finally to the modification of Eq.(A10) \cite{Zalka1999} to
\BE
\frac{1}{D}\sum\limits_{y=1}^D|| \ket{\phi_J}-\ket{\phi^y_J}||^2 \geq 2-2\sqrt{p\frac{n}{N}}-2\sqrt{(1-p)\left(1-\frac{n}{N}\right)}
\EE 
which gives us directly \lem{lem:zalka2}.
Also this bound is satisfied by Grover's algorithm. 
\end{proof}

The above stated optimality proof of Grover's algorithm can be easily generalized to situation where we start in a state $\ket{\zeta_y}$ with success probability $a_y=a=\sin^2 \phi$  
 and try to optimize the success probability $p_y$ of $V_y^J\ket{\zeta_y}$ with the help of maximal $J$ oracle queries. 
 \lem{lemma:distance} is independent from the initial state and can therefore  directly be applied. 
From \eq{eq:min_difference} we find
\BE
\frac{1}{D}\sum\limits_{y=1}^D|| V_y^J\ket{\zeta_y}-\ket{\zeta_y}||^2\geq 2-2\sum\limits_{y}^D \sqrt{p_y\sin^2\phi}\sqrt{(1-p_y)\cos^2\phi}
\EE
which is minimal if $p_y=p \,\forall y$. Thus we find
\BE
\frac{1}{D}\sum\limits_{y=1}^D|| V_y^J\ket{\zeta_y}-\ket{\zeta_y}||^2\geq 2-2\sqrt{p\sin^2\phi}\sqrt{(1-p)\cos^2\phi}.\label{eq:gGrover1}
\EE

 \lem{lemma:distance} and \eq{eq:gGrover1} can be simultaneously saturated by starting in a state 
\BE
\ket{\zeta_s}=\sin \phi \frac{1}{\sqrt{|\mathcal{W}_y}|}\sum\limits_{\ket{z}\in \mathcal{W}_y}\ket{z} + \cos \phi \frac{1}{\sqrt{|\mathcal{L}_y|}}\sum\limits_{\ket{z}\in \mathcal{L}_y}\ket{z}
\EE
 and performing Grover iterations via the unitary
\begin{eqnarray}
V_y^J&=&\Big[(\mathds{1}-\ket{\psi}\bra{\psi})O_y\Big]^J\\
\ket{\phi}&=&\frac{1}{\sqrt{N}}\sum\limits_{z=1}^N \ket{z}.
\end{eqnarray}
Applying $V^J$ with an empty oracle on $\ket{\zeta_s}$ does not change the success probability $a_y$ leading to a maximal success probability $p=\sin^2(\phi+\nu)$ with $\sin^2\nu=n/N$.

\end{widetext}

\end{appendix}

\end{document}